\pdfoutput=1
\documentclass[11pt]{article}

\usepackage[margin=1in]{geometry}
\usepackage{amsmath,amssymb,amsthm,mathtools,bm,booktabs,array}
\usepackage{graphicx}
\usepackage{placeins}
\usepackage{microtype}
\usepackage{url}
\usepackage[colorlinks=true,linkcolor=blue,citecolor=blue,urlcolor=blue]{hyperref}

\allowdisplaybreaks
\newcommand{\R}{\mathbb{R}}
\newcommand{\Pbb}{\mathbb{P}}

\newcommand{\tr}{\operatorname{tr}}
\newcommand{\Cov}{\operatorname{Cov}}
\newcommand{\Var}{\operatorname{Var}}
\newcommand{\op}{\operatorname{op}}

\newcommand{\bmu}{\bm\mu}
\newcommand{\bSigma}{\mathbf{\Sigma}}
\newcommand{\bI}{\mathbf I}
\newcommand{\bS}{\mathbf S}
\newcommand{\bD}{\mathbf D}
\newcommand{\bR}{\mathbf R}
\newcommand{\bXmat}{\mathbf X}
\newcommand{\bZmat}{\mathbf Z}
\newcommand{\bOmega}{\mathbf{\Omega}}
\newcommand{\bA}{\mathbf A}
\newcommand{\bB}{\mathbf B}
\newcommand{\bDelta}{\bm\Delta}
\newcommand{\bdelta}{\bm\delta}

\newcommand{\calS}{\mathcal S}
\newcommand{\CCT}{\operatorname{CCT}}

\theoremstyle{plain}
\newtheorem{theorem}{Theorem}
\newtheorem{lemma}{Lemma}
\theoremstyle{definition}
\newtheorem{assumption}{Assumption}
\theoremstyle{remark}
\newtheorem{remark}{Remark}

\title{Cauchy Aggregation of Ridge-Regularized Hotelling Tests for High-Dimensional Change-Point Detection}

\author{%
Ping Zhao\\
\small School of Mathematical Science, Tiangong University, Tianjin, China\\
\small \texttt{zhaoping@tiangong.edu.cn}
\and
Le Zhou\\
\small Department of Mathematics, Hong Kong Baptist University, Hong Kong, China\\
\small \texttt{lezhou@hkbu.edu.hk}
\and
Long Feng\\
\small School of Statistics and Data Science, Nankai University, Tianjin, China\\
\small \texttt{flnankai@nankai.edu.cn}
}

\date{}

\begin{document}

\maketitle

\begin{abstract}
Ridge-regularized Hotelling-type (RHT) change-point tests depend on a ridge parameter $\lambda$, but the power-optimal value is determined by the unknown covariance structure and the unknown mean shift. We avoid selecting a single ridge value by computing fixed-ridge p-values on a finite deterministic grid and aggregating them with the Cauchy combination rule. Under the standard random-matrix conditions for fixed-ridge RHT statistics, we establish finite-grid joint weak convergence of the ridge processes. This leads to fixed-level validity under joint-limit calibration and small-tail validity for the analytic Cauchy p-value. Monte Carlo experiments show that deterministic-grid Cauchy aggregation has stable size behavior and achieves power close to the best stable fixed ridge choice across a range of covariance and signal configurations.
\end{abstract}

\noindent\textbf{Keywords:} Cauchy combination; change point; deterministic ridge grid; high-dimensional mean; random matrix theory; ridge-regularized Hotelling test.

\medskip
\noindent\textbf{MSC 2020:} 62H15, 62M10, 60B20, 62G10.

\bigskip
\section{Introduction}

Ridge regularization has long been used to stabilize inverse problems and covariance-based multivariate procedures.  In the classical low-dimensional setting, Tikhonov regularization and ridge regression add a positive multiple of the identity matrix to an otherwise unstable inverse; see Tikhonov \cite{tikhonov1963regularization} and Hoerl and Kennard \cite{hoerl1970ridge}.  The associated tuning-parameter literature was primarily estimation-oriented.  Cross-validation, generalized cross-validation and related risk criteria select the regularization level by balancing bias and variance; see Craven and Wahba \cite{craven1979smoothing}.  Similar shrinkage ideas were later used in regularized discriminant analysis and covariance estimation, where the tuning parameter interpolates between an empirical covariance matrix and a stable target; see Friedman \cite{friedman1989rda} and Ledoit and Wolf \cite{ledoit2004wellconditioned}.

In high-dimensional testing, ridge regularization plays a more structural role.  When $p$ is comparable to $n$, the sample covariance matrix is singular or ill-conditioned, and Hotelling-type tests based on $\bS_n^{-1}$ are no longer available without strong structural assumptions on $\bSigma_p$.  Ridge-regularized Hotelling statistics replace $\bS_n^{-1}$ by $(\bS_n+\lambda\bI_p)^{-1}$, producing a stable covariance normalization while retaining spectral information about the population covariance matrix.  This idea has been developed for high-dimensional mean testing by Chen et al. \cite{chen2011regularized} and Li et al. \cite{li2020adaptable}, and it is closely connected to deterministic-equivalent arguments from random matrix theory; see Bai and Silverstein \cite{bai2010spectral}.  For high-dimensional change-point inference, self-normalized and covariance-free methods have also been proposed; see Zhang and Lavitas \cite{zhang2018self} and Wang et al. \cite{wang2022inference}.  More recently, Li and Xu \cite{lixu2026ridgecp} introduced ridge-regularized CUSUM statistics for high-dimensional mean change-point detection and showed that, for each fixed deterministic $\lambda>0$, the resulting RHT scan statistic has a pivotal Gaussian-process null limit.

The remaining difficulty is the choice of $\lambda$.  A large ridge value moves the statistic toward an unnormalized $\ell_2$ CUSUM statistic, whereas a small ridge value uses covariance information more aggressively.  Power calculations for RHT-type tests show that the optimal value depends on the covariance structure of the mean shift and on its alignment with the eigenspaces of $\bSigma_p$.  Such information is unavailable in a change-point problem, where the location, direction, sparsity pattern and covariance alignment of the jump are all unknown under the alternative.  Consequently, choosing a single ridge value can lead to substantial power loss under misspecified alternatives.

This paper takes an aggregation approach.  We fix a finite deterministic grid $\Lambda_n=\{\lambda_{1,n},\ldots,\lambda_{L,n}\}$, compute the fixed-ridge RHT p-value at each grid point, and combine the p-values by the Cauchy combination rule of Liu and Xie \cite{liu2020cauchy}.  The method avoids estimating a power-optimal ridge value and is directly compatible with existing fixed-ridge RHT implementations.  On the theoretical side, finite-grid aggregation requires a joint null result across ridge values rather than only marginal fixed-$\lambda$ limits.  We derive this finite-grid joint convergence under the standard RHT random-matrix assumptions.  The marginal limiting processes remain pivotal, while the cross-ridge covariance may depend on the limiting spectral structure.  This distinction yields two calibration statements: joint-limit calibration gives asymptotically exact fixed-level size, and the analytic Cauchy p-value gives the usual small-tail validity without estimating cross-ridge dependence.  Simulations indicate that the proposed aggregation maintains reasonable size and tracks the best stable fixed ridge choice across covariance and signal designs.

The remainder of the paper is organized as follows.  Section~2 formulates the change-point testing problem, reviews the fixed-ridge RHT scan statistic from Li and Xu \cite{lixu2026ridgecp}, and introduces the deterministic-grid Cauchy aggregation rule.  Section~3 establishes the finite-grid joint null limit, size validity and conditional adaptive consistency.  Section~4 reports size and power simulations, and Section~5 concludes with a brief discussion.  Technical proofs are collected in the Appendix.

\section{Problem formulation, fixed-ridge statistics and Cauchy aggregation}

\subsection{Problem formulation and fixed-ridge statistics}

The model, scan statistic and fixed-ridge marginal calibration in this subsection follow Li and Xu \cite{lixu2026ridgecp}.  We restate them in a self-contained form because the proposed Cauchy aggregation starts from their fixed-ridge RHT p-values.  Let
\[
    \bm X_j=\bmu_j+\bSigma_p^{1/2}\bm Z_j,\qquad j=1,\ldots,n,
\]
where $\bm X_j\in\R^p$, $\bmu_j\in\R^p$, the covariance matrix $\bSigma_p$ is common over time, and the entries of $\bm Z_j$ are independent with mean zero and variance one.  Fix a trimming constant $\varepsilon\in(0,1/2)$.  The null hypothesis is
\[
    H_0:\quad \bmu_1=\cdots=\bmu_n=\bmu,
\]
for some unknown constant vector $\bmu\in\R^p$.  The alternative is that the mean sequence is piecewise constant with at least one abrupt change:
\[
    H_1:\quad \exists\ m\ge1,
    \quad 0=\tau_0<\tau_1<\cdots<\tau_m<\tau_{m+1}=1,
    \qquad
    \min_{0\le r\le m}(\tau_{r+1}-\tau_r)\ge\varepsilon,
\]
with vectors $\bmu^{(1)},\ldots,\bmu^{(m+1)}$ such that
\[
    \bmu_j=\bmu^{(r)},
    \qquad
    \lfloor n\tau_{r-1}\rfloor<j\le \lfloor n\tau_r\rfloor,
    \qquad r=1,\ldots,m+1,
\]
where at least one adjacent pair satisfies $\bmu^{(r)}\ne\bmu^{(r+1)}$.  In the single-change case, this reduces to
\[
    H_{1,\rm sc}:\quad
    \bmu_j=\bmu^{(1)}1\{j\le \lfloor n\tau_\star\rfloor\}
    +\bmu^{(2)}1\{j>\lfloor n\tau_\star\rfloor\},
    \qquad \bdelta_p=\bmu^{(2)}-\bmu^{(1)}\ne0,
\]
for some unknown $\tau_\star\in[\varepsilon,1-\varepsilon]$.  Thus the covariance matrix is not allowed to change; the testing problem is to detect changes in the high-dimensional mean sequence.

Let
\[
    \bS_n=n^{-1}\sum_{j=1}^n(\bm X_j-\bar{\bm X})(\bm X_j-\bar{\bm X})^T,
    \qquad \bar{\bm X}=n^{-1}\sum_{j=1}^n\bm X_j.
\]
For an ordered triple $s=(t_1,t_2,t_3)$ with $0\le t_1<t_2<t_3\le 1$, put $k(t)=\lfloor nt\rfloor+1$ and define the adjacent-segment contrast
\[
    \bDelta(s)=\frac{1}{k(t_3)-k(t_2)}\sum_{j=k(t_2)}^{k(t_3)-1}\bm X_j
    -\frac{1}{k(t_2)-k(t_1)}\sum_{j=k(t_1)}^{k(t_2)-1}\bm X_j.
\]
The effective sample size is
\[
    N(s)=\frac{\{k(t_2)-k(t_1)\}\{k(t_3)-k(t_2)\}}{k(t_3)-k(t_1)}.
\]
For a fixed deterministic ridge parameter $\lambda>0$, define
\[
    V_\lambda(s)=N(s)\bDelta(s)^T(\bS_n+\lambda\bI_p)^{-1}\bDelta(s).
\]
Let $\gamma_n=p/(n-1)$, and let $\alpha_1,\ldots,\alpha_p$ be the eigenvalues of $\bS_n$.
Set
\[
    \widehat m_\lambda=p^{-1}\sum_{j=1}^p\left\{\frac{n}{n-1}\alpha_j+\lambda\right\}^{-1},\quad
    \widehat m'_{\lambda}=p^{-1}\sum_{j=1}^p\left\{\frac{n}{n-1}\alpha_j+\lambda\right\}^{-2},
\]
\[
    \widehat\Theta_\lambda=1-\lambda\widehat m_\lambda,
\]
\[
    \widehat\Gamma_\lambda=2(1-\gamma_n+\gamma_n\lambda\widehat m_\lambda)(1-\lambda\widehat m_\lambda)
    -2(\lambda\widehat m_\lambda-\lambda^2\widehat m'_{\lambda}).
\]
The standardized local statistic is
\[
    D_\lambda(s)=\sqrt p\,\frac{p^{-1}V_\lambda(s)-\widehat\Theta_\lambda}{\widehat\Gamma_\lambda^{1/2}}.
\]

For a single change-point alternative we use
\[
    \calS_{\rm sc}(\varepsilon)=\{(0,t,1):\varepsilon\le t\le 1-\varepsilon\},
    \qquad T_{\lambda,{\rm sc}}=\sup_{s\in\calS_{\rm sc}(\varepsilon)}D_\lambda(s).
\]
For multiple changes one may use either the continuous set
\[
    \calS_{\rm mc}(\varepsilon)=\{(t_1,t_2,t_3):t_2-t_1\ge\varepsilon,
    \ t_3-t_2\ge\varepsilon\}
\]
or a deterministic discretized version.  We write $\calS$ for the selected scan set and
\[
    T_\lambda=\sup_{s\in\calS}D_\lambda(s),
\]
where $\calS=\calS_{\rm sc}(\varepsilon)$ in the single-change case and $\calS\subseteq\calS_{\rm mc}(\varepsilon)$ in the multiple-change case.  The theoretical result below is proved for the continuous single-change scan and for any fixed deterministic discretization of the multiple-change scan.  This is the version used in the simulations and avoids imposing additional entropy conditions on the three-dimensional continuous scan.

Under standard random-matrix assumptions, for each fixed deterministic $\lambda>0$,
\[
    T_\lambda \xrightarrow{d} T_0=\sup_{s\in\calS}G(s),
\]
where $G$ is a centered Gaussian process whose covariance is determined only by the overlap of the segment-contrast functions.
For the single-change scan, if $s_t=(0,t,1)$, then
\[
    \operatorname{cov}\{G(s_t),G(s_u)\}
    =\frac{\min(t,u)\{1-\max(t,u)\}}{\max(t,u)\{1-\min(t,u)\}}.
\]
Thus the marginal p-value
\[
    P_\lambda=1-F_{\calS}(T_\lambda),
\]
where $F_{\calS}$ is the distribution function of $T_0$, is asymptotically valid and is free of $\bSigma_p$.
The distribution $F_{\calS}$ is continuous for the nondegenerate Gaussian suprema considered here; for a finite scan this follows from nonsingularity of the Gaussian vector, and for the single-change scan it follows from standard anti-concentration results for Gaussian processes.  In practice $F_{\calS}$ is simulated once from the limiting Gaussian process.

\subsection{Cauchy combination over a deterministic ridge grid}

The fixed-ridge RHT statistic provides a valid test once $\lambda$ is fixed, but the ridge value that maximizes power is not universal.  A large ridge value moves the statistic toward an unnormalized $\ell_2$ CUSUM statistic, whereas a small ridge value uses covariance information more aggressively.  The favorable choice depends on the spectrum of $\bSigma_p$, the location of the change, and the alignment of the unknown mean shift with the eigenspaces of $\bSigma_p$.  These quantities are unavailable under $H_1$.  We therefore avoid selecting a single ridge parameter and instead aggregate several fixed-ridge p-values.  The Cauchy rule is used because its heavy-tailed transformation makes the combined statistic sensitive to the most significant ridge value while avoiding estimation of the cross-ridge dependence matrix.

Fix a finite deterministic grid
\[
    \Lambda_n=\{\lambda_{1,n},\ldots,\lambda_{L,n}\}\subset(0,\infty).
\]
Throughout the theory, $L$ is fixed and the grid is bounded away from zero and infinity.
For notational simplicity we write $\lambda_\ell$ for $\lambda_{\ell,n}$.
A convenient implementation after trace standardization is
\[
    \lambda_\ell=a_\ell\gamma_n,
    \qquad 0<a_1<\cdots<a_L<\infty,
\]
which is deterministic because $\gamma_n=p/(n-1)$ is fixed by the sample dimensions.
The simulations below use the deterministic grid $a_\ell\in\{0.05,0.10,\ldots,0.50\}$ for the size study.  The power plots display two representative fixed choices, $a=0.10$ and $a=0.20$, the Cauchy rule over a size-filtered deterministic subgrid, and the oracle best fixed ridge value over the same subgrid.

For each $\lambda_\ell$, compute $T_{\lambda_\ell}$ and its marginal p-value $P_{\lambda_\ell}$.
Let $w_\ell>0$ be fixed weights satisfying $\sum_{\ell=1}^L w_\ell=1$; equal weights are used in the simulations.
The Cauchy statistic and analytic Cauchy p-value are
\[
    C_n=\sum_{\ell=1}^L w_\ell\tan\{\pi(1/2-P_{\lambda_\ell})\},
    \qquad
    P_{\CCT}=\frac12-\frac{1}{\pi}\arctan(C_n).
\]
The analytic implementation rejects when $P_{\CCT}\le\alpha$, equivalently when $C_n\ge\cot(\pi\alpha)$.
This implementation does not estimate the dependence among the ridge-specific p-values.

For formal fixed-level calibration, define the critical value from the joint limit of the vector $(P_{\lambda_1},\ldots,P_{\lambda_L})$ and reject for large $C_n$.
This version has exact asymptotic size at any fixed level, but its joint critical value depends on the limiting cross-ridge dependence.
The analytic Cauchy p-value avoids this joint calibration and is justified by the usual CCT small-tail argument.
The two calibrations therefore have distinct roles: joint-limit calibration gives fixed-level exactness, while the analytic CCT gives a simple dependence-robust tail approximation.

\begin{remark}
The Cauchy rule is not a power-weighted average.
For small p-values, $\tan\{\pi(1/2-p)\}\sim(\pi p)^{-1}$, so the combined statistic is dominated by the most significant ridge value.
Hence the method behaves like a smooth adaptive minimum-p rule, but with a simple heavy-tailed reference distribution.
\end{remark}

\section{Asymptotic validity}

We state the theory for a fixed finite deterministic grid.  The regularity condition below is the same high-dimensional random-matrix condition used for the fixed-ridge RHT change-point theorem of Li and Xu \cite{lixu2026ridgecp}.  Under the null hypothesis the change-point-location part of their condition is vacuous.

\begin{assumption}[RHT regularity]\label{ass:rht}
The entries of $\bm Z_j$ are independent, have mean zero, unit variance and finite fourth moments.  Moreover $p/(n-1)\to\gamma\in(0,\infty)$, $\|\bSigma_p\|_{\op}$ is uniformly bounded, and the empirical spectral distribution $H_p(x)=p^{-1}\sum_{j=1}^p1\{\tau_{j,p}\le x\}$ of $\bSigma_p$, where $\tau_{1,p},\ldots,\tau_{p,p}$ are the eigenvalues of $\bSigma_p$, converges weakly to a nondegenerate compactly supported limit.  The trimming parameter $\varepsilon$ is fixed.  The ridge grid is deterministic and contains no data-dependent element:
\[
    \Lambda_n=\{\lambda_{1,n},\ldots,\lambda_{L,n}\},\qquad L<\infty.
\]
There are constants $0<\underline\lambda<\overline\lambda<\infty$ such that
\[
    \underline\lambda\le \min_{1\le \ell\le L}\lambda_{\ell,n}
    \le \max_{1\le \ell\le L}\lambda_{\ell,n}\le\overline\lambda,
\]
for all sufficiently large $n$, and $\lambda_{\ell,n}\to\lambda_{\ell,0}\in(0,\infty)$ for each $\ell$.  Repeated grid points are merged, so the limiting values $\lambda_{1,0},\ldots,\lambda_{L,0}$ are distinct.  The scanning set $\calS$ is either the single-change set $\calS_{\rm sc}(\varepsilon)$ or a deterministic discretized multiple-change set contained in $\calS_{\rm mc}(\varepsilon)$; the grid version used in the RHT paper is included.
\end{assumption}

For $s=(t_1,t_2,t_3)$, define the population contrast function
\[
    u_s(x)=
    \begin{cases}
    0, &0\le x<t_1,\\
    -(t_2-t_1)^{-1}, &t_1\le x<t_2,\\
    (t_3-t_2)^{-1}, &t_2\le x<t_3,\\
    0, &t_3\le x\le1,
    \end{cases}
\]
and write $\kappa(s,r)=\int_0^1u_s(x)u_r(x)\,dx$ and $\kappa(s)=\kappa(s,s)$.  Put
\[
    K_0(s,r)=\frac{\kappa(s,r)^2}{\kappa(s)\kappa(r)}.
\]
The next theorem derives the finite-grid joint null limit from the fixed-ridge RHT conditions.  The marginal covariance is pivotal, while the cross-ridge correlation is a deterministic spectral quantity.

\begin{theorem}[Finite-grid joint RHT null limit]\label{thm:joint}
Suppose Assumption~\ref{ass:rht} holds and $H_0$ is true.  Then there exist constants $\rho_{\ell m}\in[-1,1]$, $\rho_{\ell\ell}=1$, and a centered Gaussian vector process
\[
    \bm G^\Lambda(s)=\{G_1(s),\ldots,G_L(s)\}^T,
    \qquad s\in\calS,
\]
with almost surely continuous sample paths such that
\[
    \{D_{\lambda_{\ell,n}}(s):s\in\calS,
       \ell=1,\ldots,L\}
    \xrightarrow{d}
    \{G_\ell(s):s\in\calS,
       \ell=1,\ldots,L\}
\]
in $\ell^\infty(\calS)^L$.  Its covariance kernel is
\[
    \Cov\{G_\ell(s),G_m(r)\}=\rho_{\ell m}K_0(s,r),
    \qquad s,r\in\calS,
    \qquad 1\le \ell,m\le L.
\]
For each fixed $\ell$, $G_\ell$ has the usual fixed-ridge RHT null law.  In general $\rho_{\ell m}$, $\ell\ne m$, is determined by cross-products of deterministic equivalents of the two ridge resolvents and may depend on the limiting spectral distribution of $\bSigma_p$.  If $F_{\calS}$ denotes the continuous distribution function of $\sup_{s\in\calS}G_\ell(s)$, then
\[
    \bm P_n=(P_{\lambda_{1,n}},\ldots,P_{\lambda_{L,n}})^T
    \xrightarrow{d}
    \bm P_\infty=(P_{\infty,1},\ldots,P_{\infty,L})^T,
\]
where $P_{\infty,\ell}=1-F_{\calS}\{\sup_{s\in\calS}G_\ell(s)\}$.
\end{theorem}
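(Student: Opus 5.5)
The plan is the standard two-part route for weak convergence in $\ell^\infty(\calS)^L$: convergence of finite-dimensional distributions jointly across $(s,\ell)$, plus asymptotic tightness of each coordinate process. Tightness comes for free. For each fixed $\ell$, the marginal fixed-ridge result of Li and Xu \cite{lixu2026ridgecp} already gives weak convergence of $D_{\lambda_{\ell,n}}(\cdot)$ in $\ell^\infty(\calS)$, so each coordinate is asymptotically tight. A finite product of asymptotically tight sequences is asymptotically tight in the product space. For a finite discretized multiple-change scan set, tightness is trivial. Hence only the joint finite-dimensional limits and the identification of the covariance kernel need new work.

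\textbf{Step 1: reduction to a quadratic form.} Under $H_0$ the mean cancels in $\bDelta(s)$, so
\[
\bDelta(s)=n^{-1/2}\,\bSigma_p^{1/2}\sum_j c_{s,j}\bm Z_j,
\qquad
N(s)^{1/2}c_{s,j}\approx n^{-1/2}u_s(j/n)/\kappa(s)^{1/2}.
\]
I would first replace $\widehat\Theta_\lambda$ and $\widehat\Gamma_\lambda$ by their deterministic equivalents $\Theta_\lambda$ and $\Gamma_\lambda$, using the Marchenko--Pastur/Silverstein limits of $\widehat m_\lambda$ and $\widehat m'_\lambda$ with error $o_p(p^{-1/2})$ on the centering. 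These are exactly the estimates used in the fixed-$\lambda$ proof, applied at the finitely many $\lambda_\ell$. Next I write
\[
\bR_\lambda=(\bS_n+\lambda\bI_p)^{-1}.
\]
Since $\bS_n$ is nearly independent of any finite collection of contrasts (the contrasts are low-rank perturbations), I decouple by leave-out arguments and Sherman--Morrison. This gives
\[
D_\lambda(s)=p^{-1/2}\Gamma_\lambda^{-1/2}\sum_{i\ne j}a_{s,ij}\,\bm Z_i^T\bSigma_p^{1/2}\bR_\lambda\bSigma_p^{1/2}\bm Z_j+o_p(1),
\]
where $a_{s,ij}$ are the normalized contrast weights. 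The diagonal terms are absorbed into $\Theta_\lambda$; this step is again inherited from the marginal theorem, applied uniformly over a finite set of $\lambda$.

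\textbf{Step 2: joint CLT via Cramér--Wold.} For arbitrary coefficients $b_{\ell k}$ and points $s_1,\ldots,s_K$, the combination $\sum_{\ell,k}b_{\ell k}D_{\lambda_\ell}(s_k)$ is, up to $o_p(1)$, a single off-diagonal quadratic form in $(\bm Z_1,\ldots,\bm Z_n)$. Its matrix is
\[
\sum_{\ell,k}b_{\ell k}\,a_{s_k}\otimes\bSigma_p^{1/2}\bR_{\lambda_\ell}\bSigma_p^{1/2}.
\]
Conditionally on the resolvents, or after decoupling them, a martingale CLT for quadratic forms applies: fourth moments plus a Lindeberg-type condition $\|\cdot\|_{\op}/\|\cdot\|_F\to0$, which holds because $\|\bR_\lambda\|_{\op}\le\lambda^{-1}\le\underline\lambda^{-1}$. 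The limiting variance factorizes into a time part and a spectral part. The time part is
\[
\sum_{i\ne j}a_{s,ij}a_{r,ij}\to 2\,\kappa(s,r)^2/\{\kappa(s)\kappa(r)\}\cdot(\text{const}).
\]
The spectral part is
\[
p^{-1}\tr\bigl(\bSigma_p\bR_{\lambda_\ell}\bSigma_p\bR_{\lambda_m}\bigr)\to\psi(\lambda_{\ell,0},\lambda_{m,0}).
\]
Defining
\[
\rho_{\ell m}=\psi_{\ell m}/\sqrt{\psi_{\ell\ell}\psi_{mm}},
\]
Cauchy--Schwarz for the trace inner product gives $|\rho_{\ell m}|\le1$, and $\rho_{\ell\ell}=1$ is consistent with $\Gamma_\lambda$ being the diagonal value, as in the marginal result. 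This identifies the kernel $\rho_{\ell m}K_0(s,r)$. Continuity of sample paths is inherited from each coordinate, because $K_0$ is Lipschitz on $\calS_{\rm sc}(\varepsilon)$.

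\textbf{Main obstacle.} The hard part is the existence of the limit $\psi$ of the mixed trace $p^{-1}\tr(\bSigma_p\bR_{\lambda}\bSigma_p\bR_{\mu})$ for $\lambda\ne\mu$, together with its concentration. I would first try the resolvent identity
\[
\bR_\lambda-\bR_\mu=(\mu-\lambda)\bR_\lambda\bR_\mu,
\]
which reduces mixed products to differences of single-resolvent functionals $p^{-1}\tr(\bSigma_p\bR_\lambda\bSigma_p)$-type terms. These have deterministic equivalents via the generalized Marchenko--Pastur equation. If that fails, I would use the two-resolvent deterministic equivalent obtained by differentiating the equation for
\[
p^{-1}\tr\bigl(\bSigma_p(\bS_n+\lambda\bI_p+t\bSigma_p)^{-1}\bigr)
\]
in $t$. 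Concentration follows from McDiarmid or Efron--Stein on the bounded resolvents.

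\textbf{Step 3: the $p$-values.} The map $f\mapsto\sup_{s}f(s)$ is continuous on $\ell^\infty(\calS)$, so the continuous mapping theorem gives joint convergence of $(T_{\lambda_\ell})_\ell$. Since $F_{\calS}$ is continuous, applying the continuous map $x\mapsto1-F_{\calS}(x)$ coordinatewise yields $\bm P_n\xrightarrow{d}\bm P_\infty$.
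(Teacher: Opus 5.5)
\textbf{The gap is in Step 2, and it traces back to the ``nearly independent'' premise of Step 1.} You apply a CLT for quadratic forms in independent $\bm Z_j$ ``conditionally on the resolvents''. But $\bR_\lambda$ is a function of those same $\bm Z_j$, so conditioning on it destroys the independence you need. The dependence is also first order, not a low-rank nuisance.

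To see this, use the uncentered resolvent of the appendix and write $\bm w=\bXmat\tilde{\bm u}_n(s)$, $\bm w^{(j)}=\bm w-\tilde u_{n,j}(s)\bm X_j$, $y_j=\bm X_j^T\bR_{j,\lambda}\bm w^{(j)}$ and $q_j=n^{-1}\bm X_j^T\bR_{j,\lambda}\bm X_j$. Woodbury gives exactly
\[
\bm w^T\bR_\lambda\bm w-\bm w^{(j)T}\bR_{j,\lambda}\bm w^{(j)}
=(1+q_j)^{-1}\bigl\{2\tilde u_{n,j}(s)y_j-n^{-1}y_j^2+\tilde u_{n,j}(s)^2nq_j\bigr\}.
\]
So every martingale increment carries the factor $(1+q_j)^{-1}\approx a_n(\lambda)=\{1+\gamma_np^{-1}\tr(\bSigma_p\bD_n(\lambda))\}^{-1}<1$. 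It also carries a term $n^{-1}y_j^2$, quadratic in $\bm X_j$, of the same order as the linear term. This is why the centering is $\Theta_n(\lambda)=a_n(\lambda)p^{-1}\tr\{\bSigma_p\bD_n(\lambda)\}$ rather than $p^{-1}\tr\{\bSigma_p\bD_n(\lambda)\}$.

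The same factors change the variance. Let $t=p^{-1}\tr(\bSigma_p\bD_n\bSigma_p\bD_n)$. The Mar\v{c}enko--Pastur equation gives the identity $\Gamma_n(\lambda)=2a_n^4t/(1-\gamma_na_n^2t)$. A leave-one-out argument gives $p^{-1}\tr(\bSigma_p\bR_\lambda\bSigma_p\bR_\lambda)\approx t/(1-\gamma_na_n^2t)=:\psi_{\lambda\lambda}$. Hence $\Gamma_\lambda=2a_\lambda^4\psi_{\lambda\lambda}$, not $2\psi_{\lambda\lambda}$. For example, with $\bSigma_p=\bI_p$ and $\gamma=\lambda=1$, one gets $\Gamma/2\approx0.065$ against $\psi\approx0.447$.

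Consequences for your argument:
\begin{itemize}
\item Your claim that $\Gamma_\lambda$ is the diagonal value of your spectral part is false.
\item Your variance computation would give marginal variance $a_\lambda^{-4}\neq1$, contradicting the marginal theorem you invoke.
\item Defining $\rho_{\ell m}$ by normalization hides this inconsistency rather than resolving it.
\end{itemize}
One can in fact check, for instance for Gaussian data via the $n\times n$ companion resolvent, that the limiting cross-covariance of the unnormalized numerators is $2a_\ell^2a_m^2\psi_{\ell m}K_0(s,r)$. So your ratio $\psi_{\ell m}/\sqrt{\psi_{\ell\ell}\psi_{mm}}$ happens to equal the right $\rho_{\ell m}$, but that identity is exactly what your argument does not establish.

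The missing ingredient is the paper's Lemma~\ref{lem:rvclt}. It is a martingale CLT along the time index with leave-one-out resolvents. Its conditional variance is computed from the full Woodbury increments $B_{j,\lambda}$ and expressed through the cross-resolvent traces of Lemma~\ref{lem:crossres}. This yields $\rho_{\ell m}=\lim\Gamma_{\ell m,n}/\{\Gamma_n(\lambda_{\ell,n})\Gamma_n(\lambda_{m,n})\}^{1/2}$.

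\textbf{Two smaller points.}
\begin{itemize}
\item Your first route to $\psi_{\ell m}$ cannot work. The identity $\bR_\lambda-\bR_\mu=(\mu-\lambda)\bR_\lambda\bR_\mu$ only reduces $p^{-1}\tr(\bA\bR_\lambda\bR_\mu)$, which gives the $\mathcal K^I,\mathcal K^\Sigma$ difference quotients of Lemma~\ref{lem:crossres}. The middle $\bSigma_p$ in $\tr(\bSigma_p\bR_\lambda\bSigma_p\bR_\mu)$ blocks it. Your fallback, differentiating $p^{-1}\tr\{\bSigma_p(\bS_n+\lambda\bI_p+t\bSigma_p)^{-1}\}$ in $t$, only reaches $\lambda=\mu$. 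The mixed trace needs a genuine two-sided deterministic equivalent; leave-one-out gives $t_{\lambda\mu}/(1-\gamma_na_\lambda a_\mu t_{\lambda\mu})$ with $t_{\lambda\mu}=p^{-1}\tr\{\bSigma_p\bD_n(\lambda)\bSigma_p\bD_n(\mu)\}$.
\item Because $\lambda_{\ell,n}$ moves with $n$, importing tightness from the fixed-$\lambda$ theorem requires its increment bounds to hold uniformly on $[\underline\lambda,\overline\lambda]$. This is how the paper's Lemma~\ref{lem:tight} uses them.
\end{itemize}
Your Step 3 coincides with the paper's.
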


Let
\[
    C(\bm p)=\sum_{\ell=1}^Lw_\ell\tan\{\pi(1/2-p_\ell)\},
    \qquad w_\ell>0,
    \qquad \sum_{\ell=1}^Lw_\ell=1.
\]
Let $c_{\alpha,L}$ be a continuity point satisfying
\[
    \Pbb\{C(\bm P_\infty)>c_{\alpha,L}\}=\alpha.
\]

\begin{theorem}[Size validity]\label{thm:size}
Under Assumption~\ref{ass:rht} and $H_0$,
\[
    \Pbb\{C(\bm P_n)>c_{\alpha,L}\}\to\alpha .
\]
For the analytic Cauchy p-value, for every fixed $\alpha\in(0,1/2)$ such that $\cot(\pi\alpha)$ is a continuity point of $C(\bm P_\infty)$,
\[
    \lim_{n\to\infty}\Pbb(P_{\CCT}\le\alpha)
    =\Pbb\{C(\bm P_\infty)\ge\cot(\pi\alpha)\}.
\]
Thus the analytic Cauchy p-value is exactly size-$\alpha$ at a fixed level only when the last probability equals $\alpha$.  If, in addition,
\[
    \Pbb\{C(\bm P_\infty)>t\}\sim \frac{1}{\pi t},
    \qquad t\to\infty,
\]
then
\[
    \lim_{\alpha\downarrow0}\lim_{n\to\infty}
    \frac{\Pbb(P_{\CCT}\le\alpha)}{\alpha}=1.
\]
\end{theorem}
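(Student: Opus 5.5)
The proof is an application of the continuous mapping theorem to the joint convergence $\bm P_n\xrightarrow{d}\bm P_\infty$ in Theorem~\ref{thm:joint}, followed by a tail argument on the limit law only. The map $C:(0,1)^L\to\R$ is continuous. Its only singularities are at $p_\ell\in\{0,1\}$, and these carry no mass under $\bm P_\infty$. Indeed, $F_{\calS}$ is continuous, so each $P_{\infty,\ell}=1-F_{\calS}\{\sup_s G_\ell(s)\}$ is exactly Uniform$(0,1)$. Hence $\Pbb(P_{\infty,\ell}\in\{0,1\})=0$. To handle the boundary rigorously, extend $C$ to $[0,1]^L$ with values in $[-\infty,\infty]$; it is continuous on the open set $(0,1)^L$, which has $\bm P_\infty$-probability one. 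The continuous mapping theorem then gives $C(\bm P_n)\xrightarrow{d}C(\bm P_\infty)$. The first display follows because $c_{\alpha,L}$ is a continuity point of the law of $C(\bm P_\infty)$, so the portmanteau theorem yields $\Pbb\{C(\bm P_n)>c_{\alpha,L}\}\to\Pbb\{C(\bm P_\infty)>c_{\alpha,L}\}=\alpha$.

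For the analytic p-value, note that $t\mapsto\frac12-\frac1\pi\arctan t$ is strictly decreasing. For $\alpha\in(0,1/2)$ this gives
\[
\{P_{\CCT}\le\alpha\}=\{C_n\ge\tan(\pi(1/2-\alpha))\}=\{C(\bm P_n)\ge\cot(\pi\alpha)\}.
\]
Because $\cot(\pi\alpha)$ is assumed to be a continuity point, the same weak convergence gives
\[
\lim_n\Pbb(P_{\CCT}\le\alpha)=\Pbb\{C(\bm P_\infty)\ge\cot(\pi\alpha)\}.
\]
The remark that exact size holds only when this limit equals $\alpha$ is then immediate.

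For the iterated limit, the set of discontinuity points of the law of $C(\bm P_\infty)$ is at most countable. I would therefore let $\alpha\downarrow0$ along levels at which $\cot(\pi\alpha)$ is a continuity point, or argue through the $\limsup$ and $\liminf$ with monotone sandwiching. On such levels the inner limit equals $\Pbb\{C(\bm P_\infty)\ge t_\alpha\}$ with $t_\alpha=\cot(\pi\alpha)$, and at a continuity point this equals $\Pbb\{C(\bm P_\infty)>t_\alpha\}$. By the tail hypothesis this is $\sim 1/(\pi t_\alpha)$. Finally, $\cot(\pi\alpha)=\{1+o(1)\}/(\pi\alpha)$ as $\alpha\downarrow0$, so $1/(\pi t_\alpha)\sim\alpha$ and the ratio tends to $1$.

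The main technical care lies in two places. The first is justifying the continuous mapping theorem despite the unboundedness of $\tan$ at the endpoints; uniformity of the marginals handles this. The second is handling non-continuity levels in the outer limit, which I would resolve by monotonicity in $\alpha$ together with countability of the atoms. No cross-ridge dependence structure beyond Theorem~\ref{thm:joint} is needed, since the tail behaviour is assumed rather than derived.
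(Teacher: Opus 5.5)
Your proposal is correct and follows the same route as the paper: the continuous mapping theorem applied to $\bm P_n\xrightarrow{d}\bm P_\infty$ from Theorem~\ref{thm:joint}, the portmanteau theorem at the continuity points $c_{\alpha,L}$ and $\cot(\pi\alpha)$, the monotone equivalence $P_{\CCT}\le\alpha\Leftrightarrow C(\bm P_n)\ge\cot(\pi\alpha)$, and $\cot(\pi\alpha)\sim(\pi\alpha)^{-1}$ for the small-$\alpha$ limit. Your two technical refinements are sound and slightly tighten the paper's argument: the extended continuous mapping theorem, justified by exact uniformity of each $P_{\infty,\ell}$, replaces the paper's truncated map $C_\delta$, whose indicator actually makes it discontinuous at $p_\ell=\delta$ (though only on a null set), and restricting or sandwiching the outer limit handles the countably many non-continuity levels, which the paper passes over.
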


Theorem~\ref{thm:size} is stated in two parts.  Joint-limit calibration gives fixed-level exactness.  The analytic Cauchy p-value gives the usual CCT small-tail approximation of Liu and Xie \cite{liu2020cauchy}; it does not by itself imply exact size at a fixed conventional level such as $0.05$ unless the fixed-level identity above holds.

Before stating the adaptive power result, we spell out the quantities used under $H_1$.  Let $H_p$ be the empirical spectral distribution in Assumption~\ref{ass:rht}, and let $\phi_n(z)=\phi(z;\gamma_n,H_p)$ denote the solution of the Mar\v{c}enko--Pastur equation associated with $(\gamma_n,H_p)$.  On the negative real axis define
\[
    \Theta_n(\lambda)=1-\lambda\phi_n(-\lambda),
\]
\[
    \Gamma_n(\lambda)=2\{1-\gamma_n+\gamma_n\lambda\phi_n(-\lambda)\}
    \{1-\lambda\phi_n(-\lambda)\}
    -2\{\lambda\phi_n(-\lambda)-\lambda^2\phi'_n(-\lambda)\},
\]
and
\[
    \bD_p(\lambda)=
    \left[\{1-\gamma_n+\gamma_n\lambda\phi_n(-\lambda)\}\bSigma_p
    +\lambda\bI_p\right]^{-1}.
\]
The matrix $\bD_p(\lambda)$ is the deterministic equivalent of $(\bS_n+\lambda\bI_p)^{-1}$, and $\Gamma_n(\lambda)$ is the deterministic counterpart of $\widehat\Gamma_\lambda$.

We use the probability-alternative formulation of Li and Xu \cite{lixu2026ridgecp}.  In the single-change case, the mean jump is written as
\[
    \bdelta_p=p^{-3/4}\bB_p\bm w,
\]
where $\bB_p$ is a deterministic $p\times p$ matrix with bounded operator norm and $\bm w$ has independent mean-zero, unit-variance entries with uniformly controlled fourth moments.  The ridge-specific signal strength is
\[
    q_p(\lambda,\bB_p)=p^{-1}\tr\{\bD_p(\lambda)\bB_p\bB_p^T\}.
\]
If the true single change occurs at $\tau_\star\in(0,1)$ and $s=(0,t,1)$, the location-shape function is
\[
    h_{\rm sc}(t;\tau_\star)=
    \begin{cases}
    t(1-t)^{-1}(1-\tau_\star)^2, & 0\le t<\tau_\star,\\
    t^{-1}(1-t)\tau_\star^2, & \tau_\star\le t\le1.
    \end{cases}
\]
For multiple changes with true fractions $0=\tau_0<\tau_1<\cdots<\tau_m<\tau_{m+1}=1$, define $v_r(x)=1\{\tau_{r-1}\le x<\tau_r\}$ and
\[
    \psi_r(s)=\kappa(s)^{-1/2}\int_0^1u_s(x)v_r(x)\,dx,
    \qquad r=1,\ldots,m+1,
\]
where $u_s$ and $\kappa(s)$ are defined above.  Let $\bm\psi(s)=(\psi_1(s),\ldots,\psi_{m+1}(s))^T$ and collect the segment mean vectors in $\mathbf U=(\bmu^{(1)},\ldots,\bmu^{(m+1)})\in\R^{p\times(m+1)}$.  Under a separable probability alternative, $\mathbf U=p^{-3/4}\bB_p\mathbf W\bOmega_{\rm alt}^T$, where $\mathbf W\in\R^{p\times r}$ has independent mean-zero, unit-variance entries and $\bOmega_{\rm alt}\in\R^{(m+1)\times r}$ is a fixed matrix for some fixed $r$.  The corresponding location-shape function is
\[
    h_{\rm mc}(s)=\|\bOmega_{\rm alt}^T\bm\psi(s)\|_2^2.
\]
Thus, for either the single-change scan with $h(s)=h_{\rm sc}(t;\tau_\star)$ or the multiple-change scan with $h(s)=h_{\rm mc}(s)$, the deterministic drift of the $\ell$th fixed-ridge process has the form
\[
    R_{n,\ell}(s)=
    \frac{q_p(\lambda_{\ell,n},\bB_p)}{\gamma_n\Gamma_n(\lambda_{\ell,n})^{1/2}}h(s)+o(1),
\]
with the $o(1)$ term interpreted uniformly over the selected scan set in the local alternatives considered by Li and Xu \cite{lixu2026ridgecp}.  The next theorem states the aggregation consequence in a form that only requires the existence of such a drift expansion.

\begin{theorem}[Conditional adaptive consistency]\label{thm:consistency}
Let
\[
    Y_{n,\ell}=\tan\{\pi(1/2-P_{\lambda_{\ell,n}})\}.
\]
Suppose that, under a sequence of alternatives, there exist stochastic processes $Z_{n,\ell}$ and deterministic drifts $R_{n,\ell}$ satisfying
\[
    \max_{1\le\ell\le L}\sup_{s\in\calS}
    |D_{\lambda_{\ell,n}}(s)-Z_{n,\ell}(s)-R_{n,\ell}(s)|\le a_n,
    \qquad a_n\to0,
\]
and
\[
    \max_{1\le\ell\le L}\sup_{s\in\calS}|Z_{n,\ell}(s)|=O_p(1).
\]
If there exists a grid point $\ell_0$ such that
\[
    R_{n,\ell_0}^*:=\sup_{s\in\calS}R_{n,\ell_0}(s)\to\infty,
\]
and if the remaining Cauchy terms do not asymptotically cancel the diverging positive term,
\[
    \left[\sum_{\ell\ne\ell_0} w_\ell Y_{n,\ell}\right]_{-}
    =o_p\{w_{\ell_0}Y_{n,\ell_0}\},
    \qquad [x]_- = \max(-x,0),
\]
then $P_{\CCT}\to0$ in probability.  Consequently, the analytic Cauchy-combined RHT test has power tending to one.  The no-cancellation condition is automatic when $\sum_{\ell\ne\ell_0}w_\ell Y_{n,\ell}=O_p(1)$.
\end{theorem}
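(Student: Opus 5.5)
The argument is deterministic once the two stochastic conditions are fixed. It has three steps: show that the $\ell_0$th scan statistic diverges, convert this into $P_{\lambda_{\ell_0,n}}\to0$ and $Y_{n,\ell_0}\to\infty$, and then control the remaining Cauchy terms with the no-cancellation hypothesis.

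\emph{Step 1: divergence of $T_{\lambda_{\ell_0,n}}$.} For every $s\in\calS$,
\[
D_{\lambda_{\ell_0,n}}(s)\ge R_{n,\ell_0}(s)-\sup_{r\in\calS}|Z_{n,\ell_0}(r)|-a_n .
\]
Taking the supremum over $s$ gives
\[
T_{\lambda_{\ell_0,n}}\ge R^*_{n,\ell_0}-O_p(1)-a_n .
\]
Since $R^*_{n,\ell_0}\to\infty$ deterministically, $T_{\lambda_{\ell_0,n}}\to\infty$ in probability. For finite scans the supremum is attained. For the continuous single-change scan we only need a near-maximizer of $R_{n,\ell_0}$, so no measurability or attainment issue arises beyond the one already handled by the boundedness of $Z_{n,\ell_0}$.

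\emph{Step 2: $P_{\lambda_{\ell_0,n}}\to0$ and $Y_{n,\ell_0}\to\infty$.} $F_{\calS}$ is the distribution function of the supremum of a nondegenerate Gaussian process, so $1-F_{\calS}(x)\to0$ as $x\to\infty$. Hence
\[
P_{\lambda_{\ell_0,n}}=1-F_{\calS}(T_{\lambda_{\ell_0,n}})\to0
\]
in probability. Because $p\mapsto\tan\{\pi(1/2-p)\}$ is decreasing and tends to $+\infty$ as $p\downarrow0$, it follows that $Y_{n,\ell_0}\to\infty$ in probability. Only the marginal reference law $F_{\calS}$ enters here; the joint limit of Theorem~\ref{thm:joint} is not needed under the alternative.

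\emph{Step 3: control of the other terms.} Write
\[
C_n=w_{\ell_0}Y_{n,\ell_0}+\sum_{\ell\ne\ell_0}w_\ell Y_{n,\ell}\ge w_{\ell_0}Y_{n,\ell_0}-\Bigl[\sum_{\ell\ne\ell_0}w_\ell Y_{n,\ell}\Bigr]_- .
\]
By the no-cancellation hypothesis the bracket equals $\xi_n w_{\ell_0}Y_{n,\ell_0}$ with $\xi_n=o_p(1)$. Therefore
\[
C_n\ge(1-\xi_n)\,w_{\ell_0}Y_{n,\ell_0}.
\]
This tends to $\infty$ in probability because $w_{\ell_0}>0$ is fixed. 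Continuity of $\arctan$ then gives
\[
P_{\CCT}=\tfrac12-\pi^{-1}\arctan(C_n)\to0
\]
in probability. For any fixed $\alpha\in(0,1/2)$ this yields $\Pbb(C_n\ge\cot(\pi\alpha))\to1$, so the power tends to one.

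For the final remark: if $\sum_{\ell\ne\ell_0}w_\ell Y_{n,\ell}=O_p(1)$, then its negative part is also $O_p(1)$. An $O_p(1)$ quantity divided by a quantity diverging in probability is $o_p(1)$, which is the no-cancellation condition.

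The only delicate point is the ratio formulation of the $o_p$ condition. It must be read as "$\Pbb(\text{bracket}>\eta\,w_{\ell_0}Y_{n,\ell_0})\to0$ for every $\eta>0$", which stays meaningful even when $Y_{n,\ell_0}$ is negative with vanishing probability. With $\eta=1/2$ this reading gives $C_n\ge\tfrac12w_{\ell_0}Y_{n,\ell_0}$ with probability tending to one, which already suffices.
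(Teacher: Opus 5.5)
Your proposal is correct and follows the paper's proof essentially step for step. You lower-bound $T_{\lambda_{\ell_0,n}}$ by $R^*_{n,\ell_0}-O_p(1)-a_n$, deduce $P_{\lambda_{\ell_0,n}}\to0$ and $Y_{n,\ell_0}\to\infty$, and then use $C_n\ge w_{\ell_0}Y_{n,\ell_0}-\bigl[\sum_{\ell\ne\ell_0}w_\ell Y_{n,\ell}\bigr]_-$ together with the no-cancellation ratio; your explicit check that the $O_p(1)$ condition implies no-cancellation, and your reading of the $o_p$ ratio, fill in small points the paper leaves implicit.
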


\section{Simulation}

We conduct Monte Carlo experiments under the model
\[
    \bm X_j=\bmu_j+\bSigma^{1/2}\bm Z_j,
\]
where every covariance matrix is scaled so that $\tr(\bSigma)=p$.  Four primary covariance structures are considered: (i) \textit{ID}, $\bSigma=\bI_p$; (ii) \textit{Toeplitz(.3)}, where $\bSigma$ is Toeplitz with $(i,j)$ entry $0.3^{|i-j|}$; (iii) \textit{Poly Decay}, where the raw eigenvalues are $\tau_j=0.01+(p-j+0.1)^2$; and (iv) \textit{Exp Decay}, where the raw eigenvalues are $\tau_j=\exp(-3j/p)$.  Two additional dependence designs are included: \textit{Toeplitz(.6)}, with $(i,j)$ entry $0.6^{|i-j|}$, and \textit{CS(.6)}, with all off-diagonal correlations equal to $0.6$.

The trimming parameter is $\varepsilon=0.1$ and the nominal level is $\alpha=0.05$.  Marginal p-values are computed from $100,000$ simulations of the limiting single-change Gaussian-process null distribution.  The deterministic ridge grid for the size study is
\[
    \lambda/\gamma_n\in\{0.05,0.10,0.15,0.20,0.25,0.30,0.35,0.40,0.45,0.50\}.
\]
Under $H_0$, the Cauchy statistic combines all ten fixed-ridge p-values with equal weights.  We take $n\in\{200,400\}$ and $p\in\{100,200,400\}$; each configuration is based on $1,000$ replications under either Gaussian innovations or standardized $t_5$ innovations.

\begin{table}[p]
\centering
\caption{Empirical sizes for Gaussian innovations at nominal level $5\%$.}
\label{tab:size-normal}
\scriptsize
\setlength{\tabcolsep}{2.1pt}
\renewcommand{\arraystretch}{0.72}
\resizebox{\textwidth}{!}{%
\begin{tabular}{lllrrrrrrrrrrr}
\toprule
 & & & \multicolumn{10}{c}{$\lambda/\gamma_n$} & \\
\cmidrule(lr){4-13}
Cov. & $n$ & $p$ & 0.05 & 0.10 & 0.15 & 0.20 & 0.25 & 0.30 & 0.35 & 0.40 & 0.45 & 0.50 & CCT \\
\midrule
ID & 200 & 100 & 4.9 & 5.2 & 5.0 & 4.7 & 4.7 & 4.6 & 4.6 & 4.5 & 4.6 & 4.5 & 5.0 \\
ID & 200 & 200 & 1.7 & 2.3 & 3.1 & 3.4 & 3.7 & 4.3 & 4.6 & 4.9 & 5.0 & 5.4 & 3.6 \\
ID & 200 & 400 & 12.5 & 5.3 & 4.3 & 4.4 & 4.3 & 4.3 & 4.6 & 4.8 & 4.8 & 5.0 & 5.1 \\
ID & 400 & 100 & 8.3 & 8.3 & 8.4 & 8.6 & 8.6 & 8.6 & 8.6 & 8.8 & 8.9 & 8.9 & 8.7 \\
ID & 400 & 200 & 4.6 & 4.5 & 4.4 & 4.5 & 4.7 & 4.7 & 4.7 & 4.8 & 4.4 & 4.4 & 4.8 \\
ID & 400 & 400 & 2.3 & 3.1 & 3.9 & 4.3 & 4.2 & 4.4 & 4.7 & 4.8 & 4.7 & 4.6 & 4.3 \\
\addlinespace[1pt]
Toep(.3) & 200 & 100 & 5.1 & 5.5 & 5.5 & 5.8 & 6.0 & 5.9 & 6.2 & 6.3 & 6.5 & 6.6 & 6.2 \\
Toep(.3) & 200 & 200 & 1.6 & 2.7 & 3.3 & 3.3 & 3.4 & 3.6 & 4.0 & 3.8 & 4.2 & 4.3 & 3.1 \\
Toep(.3) & 200 & 400 & 9.7 & 5.5 & 4.4 & 4.1 & 4.1 & 4.2 & 4.2 & 4.1 & 4.1 & 4.0 & 4.6 \\
Toep(.3) & 400 & 100 & 8.1 & 8.1 & 7.8 & 8.0 & 7.9 & 8.0 & 8.1 & 8.3 & 8.4 & 8.2 & 7.8 \\
Toep(.3) & 400 & 200 & 3.3 & 3.8 & 4.3 & 4.4 & 4.4 & 4.5 & 4.7 & 4.9 & 5.0 & 5.2 & 4.5 \\
Toep(.3) & 400 & 400 & 2.9 & 3.4 & 3.6 & 3.9 & 3.6 & 3.8 & 4.1 & 4.0 & 4.0 & 4.1 & 3.5 \\
\addlinespace[1pt]
Poly & 200 & 100 & 5.2 & 5.9 & 6.7 & 7.0 & 7.2 & 7.6 & 7.6 & 7.7 & 7.4 & 7.7 & 7.2 \\
Poly & 200 & 200 & 3.5 & 4.3 & 4.1 & 4.3 & 4.9 & 5.6 & 5.5 & 5.6 & 5.7 & 5.8 & 5.0 \\
Poly & 200 & 400 & 5.4 & 4.3 & 4.8 & 4.8 & 5.0 & 5.1 & 5.5 & 5.3 & 5.4 & 5.4 & 5.0 \\
Poly & 400 & 100 & 7.9 & 8.5 & 8.8 & 8.6 & 8.9 & 9.1 & 9.5 & 9.8 & 9.6 & 9.7 & 8.9 \\
Poly & 400 & 200 & 5.1 & 5.5 & 5.7 & 6.0 & 6.2 & 6.1 & 6.1 & 6.5 & 6.7 & 6.7 & 6.3 \\
Poly & 400 & 400 & 3.3 & 4.2 & 4.3 & 4.6 & 5.0 & 5.3 & 5.6 & 6.1 & 6.1 & 6.4 & 5.3 \\
\addlinespace[1pt]
Exp & 200 & 100 & 3.6 & 4.6 & 4.6 & 4.5 & 4.6 & 4.8 & 5.0 & 5.2 & 5.7 & 5.8 & 4.6 \\
Exp & 200 & 200 & 2.5 & 3.5 & 3.4 & 3.7 & 4.1 & 4.3 & 4.4 & 4.5 & 5.1 & 4.9 & 4.0 \\
Exp & 200 & 400 & 7.1 & 5.1 & 4.9 & 4.9 & 4.7 & 4.6 & 4.7 & 4.7 & 4.9 & 5.0 & 5.2 \\
Exp & 400 & 100 & 8.2 & 8.0 & 7.9 & 7.9 & 8.5 & 8.4 & 8.5 & 8.6 & 8.6 & 8.4 & 8.5 \\
Exp & 400 & 200 & 5.9 & 5.8 & 5.9 & 6.0 & 6.1 & 5.9 & 5.9 & 6.1 & 6.1 & 6.4 & 6.0 \\
Exp & 400 & 400 & 3.7 & 4.8 & 4.9 & 5.3 & 5.5 & 5.8 & 6.1 & 6.6 & 6.7 & 6.9 & 5.9 \\
\addlinespace[1pt]
Toep(.6) & 200 & 100 & 5.5 & 5.8 & 6.0 & 6.2 & 6.8 & 6.7 & 7.2 & 7.3 & 7.4 & 7.6 & 6.9 \\
Toep(.6) & 200 & 200 & 2.6 & 3.8 & 4.1 & 4.6 & 5.1 & 5.3 & 5.4 & 6.0 & 6.2 & 6.3 & 4.8 \\
Toep(.6) & 200 & 400 & 5.6 & 4.2 & 4.2 & 4.2 & 4.5 & 4.7 & 4.7 & 4.9 & 5.1 & 5.2 & 4.7 \\
Toep(.6) & 400 & 100 & 8.4 & 8.4 & 8.5 & 8.7 & 8.8 & 9.1 & 9.2 & 9.1 & 9.1 & 9.3 & 9.1 \\
Toep(.6) & 400 & 200 & 4.6 & 4.8 & 4.9 & 4.9 & 5.1 & 5.4 & 5.2 & 4.8 & 5.1 & 5.1 & 4.8 \\
Toep(.6) & 400 & 400 & 3.1 & 3.9 & 4.5 & 4.5 & 4.4 & 4.4 & 4.9 & 4.8 & 5.0 & 5.2 & 4.7 \\
\addlinespace[1pt]
CS(.6) & 200 & 100 & 5.0 & 5.1 & 5.6 & 6.2 & 6.5 & 6.2 & 6.2 & 6.2 & 6.3 & 6.7 & 6.3 \\
CS(.6) & 200 & 200 & 2.7 & 3.5 & 4.0 & 4.5 & 4.7 & 5.5 & 5.6 & 5.7 & 5.7 & 6.0 & 4.9 \\
CS(.6) & 200 & 400 & 4.9 & 4.3 & 4.8 & 5.1 & 4.9 & 4.9 & 5.2 & 5.3 & 5.3 & 5.6 & 5.2 \\
CS(.6) & 400 & 100 & 6.8 & 6.7 & 6.8 & 6.9 & 7.0 & 7.0 & 6.9 & 7.1 & 7.2 & 7.0 & 7.0 \\
CS(.6) & 400 & 200 & 6.4 & 6.6 & 6.7 & 6.7 & 6.8 & 6.7 & 6.9 & 6.8 & 6.6 & 6.7 & 6.8 \\
CS(.6) & 400 & 400 & 3.4 & 4.3 & 4.8 & 4.6 & 4.7 & 4.6 & 4.8 & 5.2 & 5.1 & 5.3 & 4.7 \\
\bottomrule
\end{tabular}%
}
\vspace{-0.5em}\begin{flushleft}\tiny ID: identity; Toep(.3): Toeplitz with $(i,j)$ entry $0.3^{|i-j|}$; Poly: polynomial eigenvalue decay; Exp: exponential eigenvalue decay; Toep(.6): Toeplitz with $(i,j)$ entry $0.6^{|i-j|}$; CS(.6): compound symmetry with off-diagonal correlation $0.6$. Values are empirical rejection rates in percent.\end{flushleft}
\end{table}

\begin{table}[p]
\centering
\caption{Empirical sizes for standardized $t_5$ innovations at nominal level $5\%$.}
\label{tab:size-t}
\scriptsize
\setlength{\tabcolsep}{2.1pt}
\renewcommand{\arraystretch}{0.72}
\resizebox{\textwidth}{!}{%
\begin{tabular}{lllrrrrrrrrrrr}
\toprule
 & & & \multicolumn{10}{c}{$\lambda/\gamma_n$} & \\
\cmidrule(lr){4-13}
Cov. & $n$ & $p$ & 0.05 & 0.10 & 0.15 & 0.20 & 0.25 & 0.30 & 0.35 & 0.40 & 0.45 & 0.50 & CCT \\
\midrule
ID & 200 & 100 & 4.9 & 4.9 & 4.8 & 4.7 & 4.7 & 4.9 & 4.9 & 5.1 & 5.0 & 5.0 & 4.8 \\
ID & 200 & 200 & 1.7 & 2.7 & 3.3 & 3.6 & 3.8 & 4.3 & 4.4 & 4.8 & 5.0 & 5.1 & 3.9 \\
ID & 200 & 400 & 12.0 & 6.7 & 5.4 & 4.6 & 4.7 & 4.8 & 4.8 & 4.7 & 4.8 & 4.9 & 5.9 \\
ID & 400 & 100 & 7.7 & 7.8 & 7.8 & 8.0 & 7.8 & 7.7 & 7.7 & 7.7 & 7.8 & 8.0 & 7.7 \\
ID & 400 & 200 & 4.0 & 4.1 & 4.1 & 4.4 & 4.7 & 4.6 & 4.6 & 4.9 & 5.1 & 5.3 & 4.7 \\
ID & 400 & 400 & 3.1 & 3.6 & 4.0 & 4.0 & 4.2 & 3.8 & 4.0 & 4.3 & 4.4 & 4.4 & 4.1 \\
\addlinespace[1pt]
Toep(.3) & 200 & 100 & 4.6 & 4.8 & 5.2 & 5.3 & 5.4 & 5.5 & 5.5 & 5.6 & 5.5 & 5.3 & 5.5 \\
Toep(.3) & 200 & 200 & 1.6 & 2.6 & 3.0 & 3.4 & 3.9 & 4.3 & 4.7 & 4.8 & 4.9 & 5.0 & 3.8 \\
Toep(.3) & 200 & 400 & 10.6 & 5.7 & 4.4 & 4.0 & 4.1 & 4.4 & 4.5 & 4.8 & 4.8 & 5.0 & 4.5 \\
Toep(.3) & 400 & 100 & 7.4 & 7.4 & 7.4 & 7.3 & 7.2 & 7.3 & 7.5 & 7.6 & 7.6 & 7.5 & 7.2 \\
Toep(.3) & 400 & 200 & 4.6 & 4.7 & 4.7 & 4.8 & 4.6 & 5.1 & 5.2 & 5.4 & 5.7 & 5.7 & 4.9 \\
Toep(.3) & 400 & 400 & 3.1 & 3.6 & 3.4 & 3.7 & 3.8 & 3.7 & 3.6 & 3.8 & 4.0 & 4.1 & 3.9 \\
\addlinespace[1pt]
Poly & 200 & 100 & 5.3 & 5.2 & 5.3 & 5.3 & 5.6 & 5.5 & 5.6 & 6.1 & 6.4 & 6.4 & 5.8 \\
Poly & 200 & 200 & 3.3 & 4.0 & 4.5 & 5.0 & 5.5 & 5.8 & 6.0 & 6.0 & 6.1 & 6.1 & 5.4 \\
Poly & 200 & 400 & 4.2 & 3.6 & 3.5 & 3.6 & 3.8 & 3.8 & 4.1 & 4.0 & 4.1 & 4.2 & 3.7 \\
Poly & 400 & 100 & 9.3 & 8.9 & 9.4 & 9.1 & 9.3 & 9.7 & 9.6 & 9.5 & 9.2 & 9.2 & 9.7 \\
Poly & 400 & 200 & 4.7 & 5.2 & 5.4 & 5.5 & 5.5 & 5.2 & 5.7 & 6.1 & 6.1 & 6.2 & 5.2 \\
Poly & 400 & 400 & 3.9 & 4.0 & 3.8 & 4.3 & 4.4 & 4.3 & 4.5 & 4.7 & 5.0 & 4.8 & 4.1 \\
\addlinespace[1pt]
Exp & 200 & 100 & 5.1 & 5.4 & 5.6 & 5.8 & 5.9 & 6.4 & 6.4 & 6.3 & 6.7 & 6.9 & 6.1 \\
Exp & 200 & 200 & 3.4 & 4.9 & 5.6 & 6.0 & 6.0 & 6.1 & 6.3 & 6.5 & 6.7 & 6.8 & 6.2 \\
Exp & 200 & 400 & 6.4 & 4.8 & 4.5 & 4.5 & 4.7 & 4.9 & 4.9 & 5.0 & 4.9 & 4.9 & 5.0 \\
Exp & 400 & 100 & 8.2 & 8.4 & 8.4 & 8.1 & 8.1 & 8.2 & 8.4 & 8.8 & 9.1 & 9.0 & 8.2 \\
Exp & 400 & 200 & 6.1 & 6.3 & 6.4 & 6.3 & 6.5 & 6.3 & 6.2 & 6.2 & 6.6 & 6.8 & 6.5 \\
Exp & 400 & 400 & 3.2 & 3.8 & 4.3 & 4.4 & 4.5 & 4.7 & 4.7 & 4.9 & 4.8 & 4.8 & 4.2 \\
\addlinespace[1pt]
Toep(.6) & 200 & 100 & 5.6 & 5.7 & 5.6 & 5.7 & 5.9 & 6.4 & 6.8 & 6.8 & 6.7 & 6.6 & 6.2 \\
Toep(.6) & 200 & 200 & 3.0 & 3.9 & 4.4 & 4.7 & 5.2 & 5.5 & 5.9 & 6.2 & 6.4 & 6.5 & 5.3 \\
Toep(.6) & 200 & 400 & 6.4 & 5.1 & 5.2 & 5.4 & 5.4 & 5.7 & 5.9 & 5.7 & 5.8 & 6.1 & 5.9 \\
Toep(.6) & 400 & 100 & 8.4 & 8.4 & 8.5 & 8.5 & 8.6 & 8.5 & 8.3 & 8.4 & 8.5 & 8.4 & 8.8 \\
Toep(.6) & 400 & 200 & 5.6 & 5.6 & 5.8 & 5.9 & 6.2 & 6.3 & 6.4 & 6.6 & 7.0 & 7.2 & 6.3 \\
Toep(.6) & 400 & 400 & 2.4 & 3.2 & 3.3 & 3.6 & 4.5 & 4.5 & 4.9 & 5.2 & 5.5 & 5.7 & 4.3 \\
\addlinespace[1pt]
CS(.6) & 200 & 100 & 4.5 & 4.2 & 4.7 & 4.7 & 5.1 & 5.2 & 5.5 & 5.7 & 5.8 & 5.8 & 5.1 \\
CS(.6) & 200 & 200 & 3.0 & 3.9 & 4.0 & 4.1 & 4.8 & 4.6 & 4.7 & 5.0 & 4.9 & 5.1 & 4.6 \\
CS(.6) & 200 & 400 & 5.6 & 4.6 & 4.5 & 4.1 & 4.4 & 4.7 & 4.8 & 5.2 & 5.3 & 5.4 & 4.7 \\
CS(.6) & 400 & 100 & 7.7 & 7.5 & 7.2 & 7.2 & 7.4 & 7.3 & 7.3 & 7.4 & 7.5 & 7.8 & 7.5 \\
CS(.6) & 400 & 200 & 4.7 & 4.7 & 4.9 & 4.7 & 4.5 & 4.9 & 4.6 & 4.8 & 4.9 & 4.8 & 5.1 \\
CS(.6) & 400 & 400 & 4.0 & 4.3 & 4.4 & 4.6 & 4.6 & 4.9 & 5.1 & 5.2 & 5.0 & 5.0 & 4.4 \\
\bottomrule
\end{tabular}%
}
\vspace{-0.5em}\begin{flushleft}\tiny ID: identity; Toep(.3): Toeplitz with $(i,j)$ entry $0.3^{|i-j|}$; Poly: polynomial eigenvalue decay; Exp: exponential eigenvalue decay; Toep(.6): Toeplitz with $(i,j)$ entry $0.6^{|i-j|}$; CS(.6): compound symmetry with off-diagonal correlation $0.6$. Values are empirical rejection rates in percent.\end{flushleft}
\end{table}

Tables~\ref{tab:size-normal} and \ref{tab:size-t} show that moderate ridge values usually give rejection rates close to the nominal level.  The smallest grid value can be liberal when $p/n$ is large, especially for $n=200$ and $p=400$, reflecting the finite-sample instability of very aggressive covariance normalization.  The analytic Cauchy combination inherits this behavior when the grid contains unstable small ridge values, but remains well controlled once the included ridge values have reasonable marginal size.  This pattern motivates the use of deterministic grids that exclude extremely small ridge parameters.

For the power study, we fix $n=200$ and $p=400$.  There is a single change at $k_0=\lfloor n/2\rfloor$.  Three signal models are considered: dense independent Gaussian shifts $\bdelta\sim N(0,c\bI_p)$; covariance-aligned Gaussian shifts $\bdelta\sim N(0,c\bSigma)$; and sparse shifts with three randomly selected nonzero coordinates equal to $\pm5c$.  The signal strength $c$ varies over a panel-specific six-point grid chosen to display the transition from size to high power.  Four procedures are reported: fixed $\lambda/\gamma_n=0.1$, fixed $\lambda/\gamma_n=0.2$, the analytic Cauchy rule over the size-filtered deterministic subgrid, and the oracle best single ridge value over that same subgrid.  The size filter retains, for each covariance model, those fixed ridge values whose empirical Gaussian size is at most $10\%$; this is used only as a simulation device for comparing power under approximately controlled size.  The power curves are based on $10,000$ replications.

\begin{figure}[p]
\centering
\includegraphics[width=0.98\textwidth]{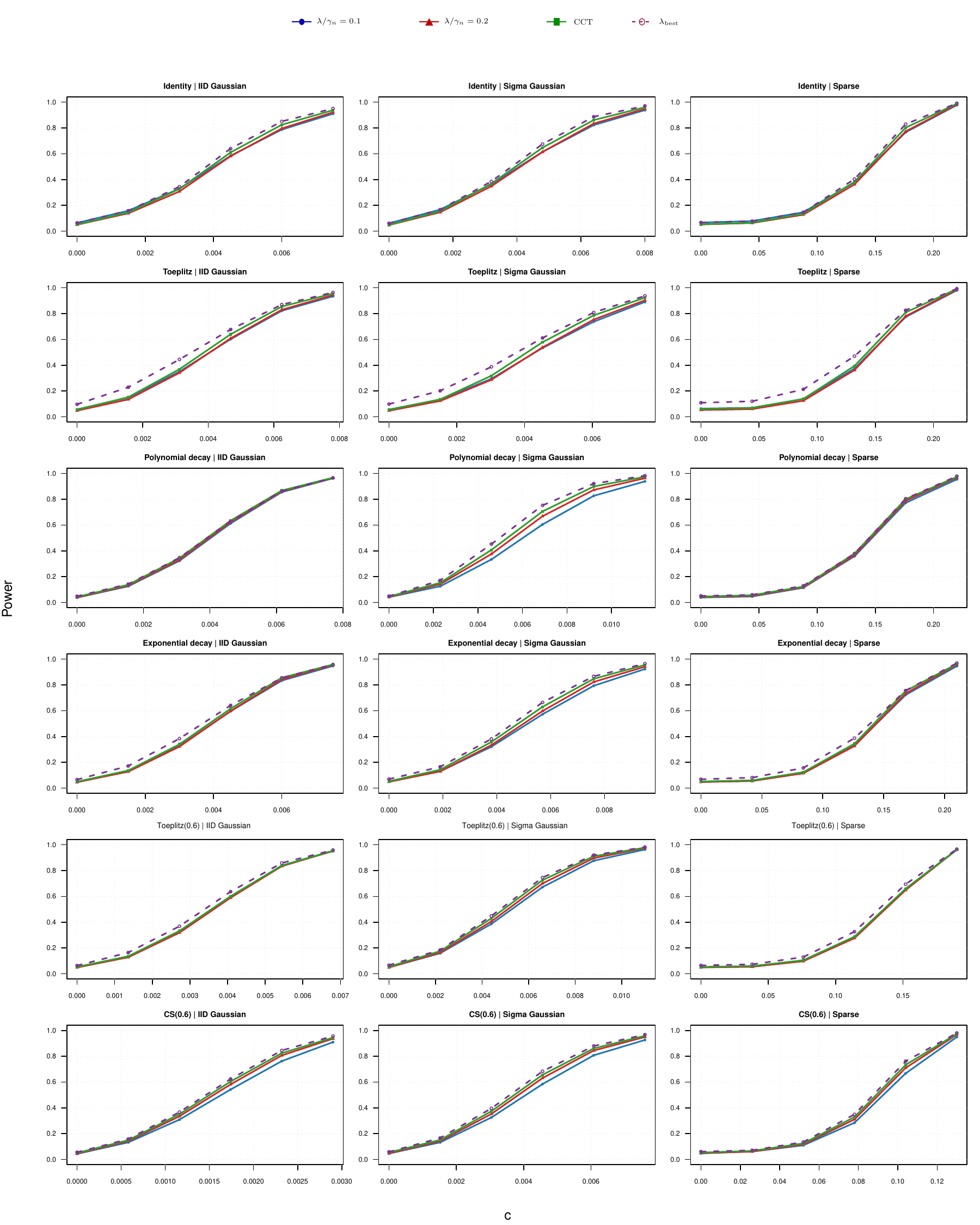}
\caption{Empirical power curves for $n=200$ and $p=400$.  Rows correspond to ID, Toeplitz(.3), Poly Decay, Exp Decay, Toeplitz(.6) and CS(.6); columns correspond to dense independent Gaussian, covariance-aligned Gaussian and sparse shifts.  In the legend, $\mathrm{CCT}$ denotes the Cauchy-combined test and $\lambda_{\mathrm{best}}$ denotes the oracle best fixed ridge value on the same size-filtered deterministic subgrid.}
\label{fig:power}
\end{figure}
\FloatBarrier

Figure~\ref{fig:power} shows that the Cauchy-combined test is generally close to the better of the two displayed fixed-ridge tests and follows the oracle best single ridge curve without requiring an oracle choice.  The advantage is most visible when the covariance structure changes the relative performance of $\lambda/\gamma_n=0.1$ and $0.2$.  For sparse shifts, dense RHT-type procedures are not designed to dominate max-type tests, but the Cauchy curve still tracks the better stable ridge choice.  In covariance-aligned Gaussian panels, the combined test is useful because the favorable ridge value varies with the spectrum of $\bSigma$.  Overall, the simulations support deterministic-grid aggregation: moderate ridge values control size, and Cauchy combination protects power against choosing a single suboptimal ridge.

To summarize the distance from the oracle curve in Figure~\ref{fig:power}, let $\mathcal C_{\bSigma,H}$ be the six signal strengths used for a fixed covariance design $\bSigma$ and signal class $H$.  For a method $m\in\{0.1,0.2,\mathrm{CCT}\}$, define
\[
    \Delta_m(\bSigma,H)=\sum_{c\in\mathcal C_{\bSigma,H}}
    \{P_{\mathrm{best}}(c)-P_m(c)\},
\]
where $P_{\mathrm{best}}(c)$ is the empirical power of the oracle best fixed ridge value on the size-filtered subgrid and $P_m(c)$ is the empirical power of method $m$.  Thus a smaller value of $\Delta_m(\bSigma,H)$ indicates a smaller cumulative loss relative to the oracle fixed-ridge benchmark.

\begin{table}[!htbp]
\centering
\caption{Cumulative power gap to the oracle best fixed ridge value over the six signal strengths in each panel of Figure~\ref{fig:power}.}
\label{tab:power-gap}
\scriptsize
\setlength{\tabcolsep}{3.2pt}
\renewcommand{\arraystretch}{0.86}
\resizebox{\textwidth}{!}{%
\begin{tabular}{lrrrrrrrrr}
\toprule
& \multicolumn{3}{c}{Dense independent Gaussian}
& \multicolumn{3}{c}{Covariance-aligned Gaussian}
& \multicolumn{3}{c}{Sparse} \\
\cmidrule(lr){2-4}\cmidrule(lr){5-7}\cmidrule(lr){8-10}
Covariance
& $\Delta_{0.1}$ & $\Delta_{0.2}$ & $\Delta_{\rm CCT}$
& $\Delta_{0.1}$ & $\Delta_{0.2}$ & $\Delta_{\rm CCT}$
& $\Delta_{0.1}$ & $\Delta_{0.2}$ & $\Delta_{\rm CCT}$ \\
\midrule
ID            & 0.1752 & 0.2138 & 0.1114 & 0.1800 & 0.2081 & 0.1061 & 0.1025 & 0.1561 & 0.0902 \\
Toeplitz(.3)  & 0.3671 & 0.3769 & 0.2551 & 0.3989 & 0.3938 & 0.2486 & 0.3319 & 0.3675 & 0.2696 \\
Poly Decay    & 0.0821 & 0.0496 & 0.0047 & 0.4592 & 0.2651 & 0.1437 & 0.1175 & 0.0689 & 0.0233 \\
Exp Decay     & 0.1957 & 0.1869 & 0.1194 & 0.3215 & 0.2390 & 0.1262 & 0.1955 & 0.1862 & 0.1233 \\
Toeplitz(.6)  & 0.1567 & 0.1811 & 0.1361 & 0.2411 & 0.1572 & 0.0728 & 0.1441 & 0.1659 & 0.1265 \\
CS(.6)        & 0.3071 & 0.1594 & 0.0821 & 0.3308 & 0.1848 & 0.0969 & 0.2387 & 0.1429 & 0.0777 \\
\bottomrule
\end{tabular}%
}
\vspace{-0.5em}\begin{flushleft}\tiny $\Delta_{0.1}$ and $\Delta_{0.2}$ correspond to the fixed choices $\lambda/\gamma_n=0.1$ and $0.2$; $\Delta_{\rm CCT}$ corresponds to the analytic Cauchy-combined test.\end{flushleft}
\end{table}
\FloatBarrier

Table~\ref{tab:power-gap} confirms the visual pattern in Figure~\ref{fig:power}.  The Cauchy-combined test has the smallest cumulative loss in every covariance-signal panel.  The reduction is especially clear under Poly Decay, CS(.6), and the covariance-aligned Gaussian alternatives, where the best fixed ridge value varies more strongly with the spectral structure.  This supports the main practical message: aggregating a stable deterministic ridge grid can recover most of the oracle fixed-ridge power without selecting a single value of $\lambda$.

\section{Discussion}

This paper studies ridge tuning in high-dimensional RHT change-point tests from an aggregation perspective.  Rather than estimating a single power-optimal ridge value, the proposed procedure evaluates fixed-ridge statistics on a finite deterministic grid and combines their marginal p-values by the Cauchy rule.  The main theoretical contribution is a finite-grid joint null limit for the ridge processes.  This result justifies fixed-level joint-limit calibration and clarifies the scope of the analytic Cauchy p-value, which provides a simple small-tail approximation without estimating cross-ridge dependence.

The simulations support the proposed use of deterministic-grid aggregation.  Moderate ridge values lead to stable marginal size, while the Cauchy statistic protects power against selecting a suboptimal fixed ridge value.  The results also show that very small ridge values can be unstable in finite samples when $p/n$ is large.  In applications, the deterministic grid should therefore be chosen away from zero unless additional calibration is used.

Two extensions are natural.  First, one could develop fixed-level calibration for the analytic Cauchy statistic by estimating the finite-grid joint limit or by constructing a valid bootstrap for the cross-ridge p-value vector.  Second, the deterministic finite-grid theory could be extended to growing or data-adaptive ridge grids, where uniform stochastic equicontinuity in $\lambda$ becomes essential.

\appendix
\setcounter{lemma}{0}
\setcounter{theorem}{0}
\setcounter{assumption}{0}
\setcounter{remark}{0}
\renewcommand{\thelemma}{\Alph{section}.\arabic{lemma}}
\renewcommand{\thetheorem}{\Alph{section}.\arabic{theorem}}
\renewcommand{\theassumption}{\Alph{section}.\arabic{assumption}}
\renewcommand{\theremark}{\Alph{section}.\arabic{remark}}

\section{Proofs}

The notation below is used only in the appendix.  Put
\[
    \bXmat=(\bm X_1,\ldots,\bm X_n),
    \qquad
    \bOmega_n=n^{-1}\bXmat\bXmat^T,
    \qquad
    \bR_\lambda=(\bOmega_n+\lambda\bI_p)^{-1}.
\]
Let $\phi_n(-\lambda)$ be the deterministic Marcenko--Pastur transform associated with $(\gamma_n,H_p)$ and define
\[
    a_n(\lambda)=1-\gamma_n+\gamma_n\lambda\phi_n(-\lambda),
    \qquad
    \Theta_n(\lambda)=1-\lambda\phi_n(-\lambda),
\]
\[
    \Gamma_n(\lambda)=2a_n(\lambda)\Theta_n(\lambda)
    -2\{\lambda\phi_n(-\lambda)-\lambda^2\phi'_n(-\lambda)\},
    \qquad
    \bD_n(\lambda)=\{a_n(\lambda)\bSigma_p+\lambda\bI_p\}^{-1}.
\]
The functions $a_n(\lambda)$, $\Theta_n(\lambda)$ and $\Gamma_n(\lambda)$ and the matrix $\bD_n(\lambda)$ are deterministic.  On $[\underline\lambda,\overline\lambda]$,
\[
    0<c\le a_n(\lambda)\le C,
    \qquad
    \|\bD_n(\lambda)\|_{\op}\le \underline\lambda^{-1},
    \qquad
    0<c\le \Gamma_n(\lambda)\le C<\infty,
\]
where the constants do not depend on $n$ or $\lambda$; these bounds are the negative-real-axis consequences of the Marcenko--Pastur equation used in the fixed-ridge RHT theorem \cite{bai2010spectral,lixu2026ridgecp}.

\begin{lemma}[Discrete contrast bounds]\label{lem:contrast}
For $s=(t_1,t_2,t_3)\in\calS$, define
\[
\tilde u_{n,j}(s)=\sqrt n
\begin{cases}
0, & j<k(t_1),\\
-\{k(t_2)-k(t_1)\}^{-1}, & k(t_1)\le j<k(t_2),\\
\{k(t_3)-k(t_2)\}^{-1}, & k(t_2)\le j<k(t_3),\\
0, & j\ge k(t_3).
\end{cases}
\]
Then, uniformly over $s,r\in\calS$,
\[
    n^{-1/2}\bXmat\tilde{\bm u}_n(s)=\bDelta(s),
\]
\[
    \|\tilde{\bm u}_n(s)\|_2^2=\kappa(s)+O(n^{-1}),
    \qquad
    \tilde{\bm u}_n(s)^T\tilde{\bm u}_n(r)=\kappa(s,r)+O(n^{-1}),
\]
\[
    \|\tilde{\bm u}_n(s)\|_2\le C_\varepsilon,
    \qquad
    \sqrt n\|\tilde{\bm u}_n(s)\|_\infty\le C_\varepsilon,
\]
and
\[
    \|\tilde{\bm u}_n(s)-\tilde{\bm u}_n(r)\|_2^2
    \le C_\varepsilon \|s-r\|_1+C_\varepsilon n^{-1}.
\]
\end{lemma}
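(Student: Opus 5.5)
The plan is an elementary direct computation. Everything rests on the facts that every $s\in\calS$ satisfies $t_2-t_1\ge\varepsilon$ and $t_3-t_2\ge\varepsilon$, and that $k(t)=\lfloor nt\rfloor+1$ differs from $nt$ by at most one.

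\emph{Step 1 (identity and crude bounds).} The identity $n^{-1/2}\bXmat\tilde{\bm u}_n(s)=\bDelta(s)$ is immediate from the definition. The factor $\sqrt n$ in $\tilde u_{n,j}$ cancels $n^{-1/2}$, leaving the two segment averages with weights $\mp\{k(t_i)-k(t_{i-1})\}^{-1}$.

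For $i=2,3$ write $N_i=k(t_i)-k(t_{i-1})$. Then $N_i=n(t_i-t_{i-1})+e_i$ with $|e_i|\le1$, so $N_i\ge n\varepsilon-1\ge n\varepsilon/2$ for $n\ge 2/\varepsilon$. This gives
\[
\frac{n}{N_i}=\frac{1}{t_i-t_{i-1}}+O(n^{-1})
\]
uniformly, since $|n/N_i-1/(t_i-t_{i-1})|=|e_i|/\{N_i(t_i-t_{i-1})\}\le 2/(n\varepsilon^2)$. It follows that
\[
\|\tilde{\bm u}_n(s)\|_2^2=n/N_2+n/N_3=\kappa(s)+O(n^{-1}),
\qquad \kappa(s)=(t_2-t_1)^{-1}+(t_3-t_2)^{-1}\le 2/\varepsilon,
\]
so $\|\tilde{\bm u}_n(s)\|_2\le C_\varepsilon$. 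Likewise $\sqrt n\|\tilde{\bm u}_n(s)\|_\infty=n/\min_i N_i\le 2/\varepsilon$.

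\emph{Step 2 (cross products).} Both $\tilde{\bm u}_n(s)$ and $\tilde{\bm u}_n(r)$ are piecewise constant on at most two index blocks each. Hence $\tilde{\bm u}_n(s)^T\tilde{\bm u}_n(r)$ is a sum of at most four terms of the form
\[
c_{a}c'_{b}\,n\,\frac{\#(I_a\cap J_b)}{N_aN'_b},
\]
where $I_a$ and $J_b$ are the integer blocks and $c_a,c'_b\in\{\pm1\}$.

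Each block is an integer interval whose endpoints are within one of $n$ times the continuous endpoints. Therefore $\#(I_a\cap J_b)=n|A_a\cap B_b|+O(1)$, where $A_a$ and $B_b$ are the corresponding subintervals of $[0,1]$. Combining this with Step 1, each term equals
\[
c_ac_b'\,\frac{|A_a\cap B_b|}{|A_a||B_b|}+O(n^{-1}),
\]
with constants depending only on $\varepsilon$. Summing over the four terms gives $\kappa(s,r)+O(n^{-1})$ uniformly in $s,r\in\calS$.

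\emph{Step 3 (modulus bound).} Expand the squared distance and apply Step 2:
\[
\|\tilde{\bm u}_n(s)-\tilde{\bm u}_n(r)\|_2^2=\kappa(s)+\kappa(r)-2\kappa(s,r)+O(n^{-1})=\int_0^1\{u_s(x)-u_r(x)\}^2dx+O(n^{-1}).
\]
To bound the integral, split $[0,1]$ into two parts.
\begin{itemize}
\item On the symmetric differences of the supporting intervals, the total measure is at most $\sum_i|t_i-r_i|=\|s-r\|_1$ and $|u_s-u_r|\le 2/\varepsilon$. This part contributes at most $4\varepsilon^{-2}\|s-r\|_1$.
\item On the common parts, the levels differ by $|(t_2-t_1)^{-1}-(r_2-r_1)^{-1}|\le\varepsilon^{-2}(|t_2-r_2|+|t_1-r_1|)$, and similarly for the second segment. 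Squaring and using that these differences are bounded by a constant depending on $\varepsilon$ gives a bound of the form $C_\varepsilon\|s-r\|_1$.
\end{itemize}

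No step is a real obstacle. The only point needing care is the $O(1)$ bookkeeping of integer counts in Step 2, including the boundary case $t_3=1$ where $k(t_3)=n+1$. This is uniform because the number of pieces is bounded and every denominator is at least $n\varepsilon/2$.
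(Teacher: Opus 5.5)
Your proposal is correct and follows essentially the same route as the paper. Both arguments use the exact block identity for $\bDelta(s)$, the expansion $n/N_i=(t_i-t_{i-1})^{-1}+O(n^{-1})$ from the trimming bound, block-intersection counting for the cross products, and an $L^2[0,1]$ modulus for $s\mapsto u_s$ built from symmetric differences of the segments. Your Step~3 is more explicit than the paper's: you reduce the discrete increment to $\int_0^1(u_s-u_r)^2\,dx+O(n^{-1})$ by expanding the square with the cross-product result and then bound the level changes separately, whereas the paper only asserts the Lipschitz property of $s\mapsto u_s$.
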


\begin{proof}
For $m_{ab}=k(t_b)-k(t_a)$,
\[
    n^{-1/2}\bXmat\tilde{\bm u}_n(s)
    =m_{23}^{-1}\sum_{j=k(t_2)}^{k(t_3)-1}\bm X_j
     -m_{12}^{-1}\sum_{j=k(t_1)}^{k(t_2)-1}\bm X_j=
     \bDelta(s).
\]
Since $m_{12}\ge n\varepsilon-2$ and $m_{23}\ge n\varepsilon-2$,
\[
    \|\tilde{\bm u}_n(s)\|_2^2
    =n(m_{12}^{-1}+m_{23}^{-1})
    =\frac{1}{t_2-t_1}+\frac{1}{t_3-t_2}+O(n^{-1})
    =\kappa(s)+O(n^{-1}),
\]
\[
    \sqrt n\|\tilde{\bm u}_n(s)\|_\infty
    \le n\max(m_{12}^{-1},m_{23}^{-1})
    \le \frac{2}{\varepsilon}.
\]
For $I_{ab}(s)=\{j:k(t_a)\le j<k(t_b)\}$,
\[
\begin{aligned}
\tilde{\bm u}_n(s)^T\tilde{\bm u}_n(r)
&=n\sum_{a,b\in\{1,2\}}
   c_a(s)c_b(r)|I_a(s)\cap I_b(r)|,\\
|n^{-1}|I_a(s)\cap I_b(r)|
&\hspace{1.5em}
-\int_0^1 1_{I_a(s)}(x)1_{I_b(r)}(x)\,dx|\le 4n^{-1},
\end{aligned}
\]
where $c_1(s)=-(t_2-t_1)^{-1}+O(n^{-1})$, $c_2(s)=(t_3-t_2)^{-1}+O(n^{-1})$ and all denominators are bounded below by $\varepsilon/2$.  Hence
\[
    \tilde{\bm u}_n(s)^T\tilde{\bm u}_n(r)=\int_0^1u_s(x)u_r(x)\,dx+O(n^{-1}).
\]
The map $(t_1,t_2,t_3)\mapsto u_s$ is Lipschitz in $L^2[0,1]$ on the trimmed set except for moving endpoints; the endpoint contribution is bounded by the length of the symmetric difference.  Therefore
\[
    \int_0^1\{u_s(x)-u_r(x)\}^2dx\le C_\varepsilon\|s-r\|_1,
\]
which together with the discretization error gives the displayed bound.
\end{proof}

\begin{lemma}[Finite-grid cross-resolvent deterministic equivalents]\label{lem:crossres}
Under Assumption~\ref{ass:rht}, for every pair $\lambda_{a,n},\lambda_{b,n}\in\Lambda_n$ there exist deterministic quantities
\[
    \mathcal K^{\Sigma\Sigma}_{ab,n},\qquad
    \mathcal K^{\Sigma}_{ab,n},\qquad
    \mathcal K^{I}_{ab,n},
\]
with finite limits such that
\[
\begin{aligned}
&p^{-1}\tr\{\bR_{\lambda_{a,n}}\bSigma_p\bR_{\lambda_{b,n}}\bSigma_p\}
      -\mathcal K^{\Sigma\Sigma}_{ab,n}=o_p(1),\\
&p^{-1}\tr\{\bR_{\lambda_{a,n}}\bR_{\lambda_{b,n}}\bSigma_p\}
      -\mathcal K^{\Sigma}_{ab,n}=o_p(1),\\
&p^{-1}\tr\{\bR_{\lambda_{a,n}}\bR_{\lambda_{b,n}}\}
      -\mathcal K^{I}_{ab,n}=o_p(1).
\end{aligned}
\]
For $\bA\in\{\bI_p,\bSigma_p\}$, define
\[
    \vartheta_{\bA,n}(\lambda)=p^{-1}\tr\{\bD_n(\lambda)\bA\}.
\]
Then the deterministic equivalents without a middle population covariance are
\[
\mathcal K^{\bA}_{ab,n}=
\begin{cases}
\displaystyle
\frac{\vartheta_{\bA,n}(\lambda_{a,n})-
      \vartheta_{\bA,n}(\lambda_{b,n})}
     {\lambda_{b,n}-\lambda_{a,n}},
     & \lambda_{a,n}\ne\lambda_{b,n},\\[2.0ex]
\displaystyle
-\partial_\lambda\vartheta_{\bA,n}(\lambda)
 \big|_{\lambda=\lambda_{a,n}},
     & \lambda_{a,n}=\lambda_{b,n},
\end{cases}
\]
where $\mathcal K^I_{ab,n}=\mathcal K^{\bA}_{ab,n}$ with $\bA=\bI_p$ and
$\mathcal K^\Sigma_{ab,n}=\mathcal K^{\bA}_{ab,n}$ with $\bA=\bSigma_p$.  The quantity
$\mathcal K^{\Sigma\Sigma}_{ab,n}$ is the two-sided deterministic equivalent for the product
$\bR_{\lambda_{a,n}}\bSigma_p\bR_{\lambda_{b,n}}\bSigma_p$.  It is not, in general, equal to
$p^{-1}\tr\{\bD_n(\lambda_{a,n})\bSigma_p\bD_n(\lambda_{b,n})\bSigma_p\}$.
\end{lemma}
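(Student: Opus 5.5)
The plan is to treat the three statements separately. The first two are traces of products of two resolvents with at most one population matrix inside, and they reduce to one-resolvent deterministic equivalents through the resolvent identity. The two-sided product needs its own argument.

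\emph{Reduction for $\bA\in\{\bI_p,\bSigma_p\}$.} For $\lambda_a\ne\lambda_b$ I will use
\[
\bR_{\lambda_a}\bR_{\lambda_b}=\frac{\bR_{\lambda_a}-\bR_{\lambda_b}}{\lambda_b-\lambda_a},
\]
which is exact because both resolvents are functions of the same matrix $\bOmega_n$ and therefore commute. Cyclicity of the trace then gives
\[
p^{-1}\tr\{\bR_{\lambda_a}\bR_{\lambda_b}\bA\}=\frac{p^{-1}\tr(\bR_{\lambda_a}\bA)-p^{-1}\tr(\bR_{\lambda_b}\bA)}{\lambda_b-\lambda_a}.
\]
The one-resolvent deterministic equivalent on the negative real axis, used in the fixed-ridge theorem of Li and Xu and going back to Bai--Silverstein-type results, states that
\[
p^{-1}\tr(\bR_\lambda\bA)-\vartheta_{\bA,n}(\lambda)=o_p(1)
\]
for fixed $\lambda$ and bounded $\bA$. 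Under $H_0$, centering by $\bar{\bm X}$ versus using $n^{-1}\bXmat\bXmat^T$ is a rank-one perturbation, which changes normalized traces by $O(p^{-1})$.

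Since $L$ is fixed and the limits $\lambda_{a,0},\lambda_{b,0}$ are distinct whenever $a\ne b$, the gap $\lambda_b-\lambda_a$ stays bounded away from zero. Dividing the $o_p(1)$ errors by it therefore preserves $o_p(1)$.

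For $\lambda_{a,n}\to\lambda_{b,n}$ in the diagonal case, the identity becomes
\[
\bR_\lambda^2=-\partial_\lambda\bR_\lambda.
\]
I will pass the derivative through the equivalence using analyticity. The functions $\lambda\mapsto p^{-1}\tr(\bR_\lambda\bA)$ and $\vartheta_{\bA,n}$ extend analytically to a complex neighbourhood of $[\underline\lambda,\overline\lambda]$ and are uniformly bounded there, since the distance to the spectrum is at least about $\underline\lambda/2$. Pointwise convergence in probability on a countable dense set then upgrades to locally uniform convergence by Vitali/Montel, and Cauchy's integral formula transfers the convergence to derivatives. The limits are finite because $\vartheta_{\bA,n}$ converges along with $H_p$ and $\gamma_n$.

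\emph{The two-sided term.} For $\mathcal K^{\Sigma\Sigma}_{ab,n}$ the resolvent identity does not remove the middle $\bSigma_p$, so I will write the trace as a mixed derivative. Define
\[
\bR(\lambda,\eta)=(\bOmega_n+\lambda\bI_p+\eta\bSigma_p)^{-1}.
\]
Then
\[
\partial_\eta\, p^{-1}\tr\{\bR(\lambda_a,\eta)\bSigma_p\}\big|_{\eta=0}=-p^{-1}\tr(\bR_{\lambda_a}\bSigma_p\bR_{\lambda_a}\bSigma_p)
\]
gives the diagonal case. For the off-diagonal case I will use a generalized resolvent with the two spectral parameters separated, namely the standard two-resolvent deterministic equivalent obtained from leave-one-out (Sherman--Morrison) expansions of $\bR_{\lambda_a}\bSigma_p\bR_{\lambda_b}$. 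The plan is the following.

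\begin{enumerate}
\item Write $\bOmega_n=n^{-1}\sum_j\bm X_j\bm X_j^T$, and in the identity $\bR_{\lambda_a}^{-1}-\bD_n(\lambda_a)^{-1}$ remove one column at a time via Sherman--Morrison.
\item Use quadratic-form concentration, $\bm x^T\bM\bm x-\tr\bM=O_p(\sqrt p)$ under finite fourth moments and bounded $\|\bM\|$.
\item Derive a linear self-consistent equation of the form
\[
\mathcal K^{\Sigma\Sigma}=p^{-1}\tr(\bD_a\bSigma\bD_b\bSigma)\,\big\{1+c_{ab}\mathcal K^{\Sigma\Sigma}\big\}+o_p(1),
\]
with $c_{ab}$ built from $\gamma_n a_n(\lambda_a)a_n(\lambda_b)$.
\item Solve this equation explicitly. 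This is also why the result differs from $p^{-1}\tr\{\bD_n(\lambda_a)\bSigma_p\bD_n(\lambda_b)\bSigma_p\}$.
\item Finish with a variance bound from the Efron--Stein (martingale) inequality: the normalized trace has variance $O(p^{-1})$, so it concentrates around its expectation.
\end{enumerate}

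\emph{Main obstacle.} I expect the hardest step to be showing that the self-consistent equation is stable, i.e.\ that $1-c_{ab}\,p^{-1}\tr(\bD_a\bSigma\bD_b\bSigma)$ is bounded away from zero uniformly in $n$. My first attempt will be Cauchy--Schwarz: bound the off-diagonal quantity by the geometric mean of the two diagonal versions, which are controlled by the derivative identity and the positivity $\Gamma_n\ge c$. Having finite limits of $\mathcal K$ then follows from the weak convergence of $H_p$ and continuity of the solution of the Mar\v{c}enko--Pastur equation.
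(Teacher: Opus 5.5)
You handle $\mathcal K^{I}_{ab,n}$ and $\mathcal K^{\Sigma}_{ab,n}$ for $a\ne b$ exactly as the paper does: the commuting resolvent identity, the Li--Xu one-resolvent equivalent, and separated limiting grid points. The other two pieces take a different route.

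In the coincident case the paper only asserts that the derivative passes through the equivalence, using the Lipschitz bound on $\bR_\lambda$ and locally uniform convergence. Your argument actually proves that step: analytic continuation to $\mathrm{Re}\,\lambda>\underline\lambda/2$, then Vitali's theorem applied along almost surely convergent sub-subsequences, then Cauchy's formula. Convexity of $\lambda\mapsto p^{-1}\tr(\bR_\lambda\bA)$ for $\bA\succeq0$ would also do it.

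For the two-sided trace the paper derives nothing. It cites Hachem--Loubaton--Najim and the bilinear-resolvent bounds of Hachem et al., uses truncation for the fourth-moment reduction, and so gets $\mathcal K^{\Sigma\Sigma}_{ab,n}$ only as an existence statement. Your leave-one-out derivation instead yields the closed form
\[
\mathcal K^{\Sigma\Sigma}_{ab,n}=\frac{T_{ab}}{1-c_{ab}T_{ab}},\qquad T_{ab}=p^{-1}\tr\{\bD_n(\lambda_{a,n})\bSigma_p\bD_n(\lambda_{b,n})\bSigma_p\},\qquad c_{ab}=\gamma_n a_n(\lambda_{a,n})a_n(\lambda_{b,n}),
\]
which is the correct solution of your self-consistent equation. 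This buys several things:
\begin{itemize}
\item a quantitative version of the ``not equal'' claim, since the factor $(1-c_{ab}T_{ab})^{-1}$ exceeds one;
\item finite limits as a continuous spectral integral against $H_p$;
\item in principle, explicit $\rho_{\ell m}$ in Theorem~\ref{thm:joint};
\item no truncation, because Efron--Stein uses only deterministic low-rank perturbation bounds and the expansion needs only the fourth-moment quadratic-form variance bound.
\end{itemize}
The price is that you must carry out the expansion yourself rather than cite it.

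The stability issue you flag as the main obstacle is easier than you expect, and $\Gamma_n\ge c$ is not the relevant input. Write $a_a=a_n(\lambda_{a,n})$ and $a_b=a_n(\lambda_{b,n})$. Because $\bD_n(\lambda)$ is a function of $\bSigma_p$,
\[
c_{ab}T_{ab}=\frac{1}{n-1}\sum_{i=1}^p\frac{a_a\tau_{i,p}}{a_a\tau_{i,p}+\lambda_{a,n}}\cdot\frac{a_b\tau_{i,p}}{a_b\tau_{i,p}+\lambda_{b,n}}.
\]
Since $0<a_b<1$, each second factor is at most $\|\bSigma_p\|_{\op}/(\|\bSigma_p\|_{\op}+\underline\lambda)$. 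The Mar\v{c}enko--Pastur identity $\phi_n(-\lambda)=p^{-1}\tr\bD_n(\lambda)$ gives
\[
(n-1)^{-1}\sum_i a_a\tau_{i,p}/(a_a\tau_{i,p}+\lambda_{a,n})=\gamma_n\Theta_n(\lambda_{a,n})=1-a_a<1.
\]
Hence $1-c_{ab}T_{ab}\ge\underline\lambda/(\|\bSigma_p\|_{\op}+\underline\lambda)$ uniformly in $n$, including $a=b$, and no Cauchy--Schwarz step is needed.
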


\begin{proof}
For $\bA\in\{\bI_p,\bSigma_p\}$ and $\lambda\in[\underline\lambda,\overline\lambda]$, Lemma 3.1 of Li and Xu \cite{lixu2026ridgecp} gives
\[
    p^{-1}\tr\{\bR_\lambda\bA\}-\vartheta_{\bA,n}(\lambda)=O_p(n^{-1/2}).
\]
If $\lambda_{a,n}\ne\lambda_{b,n}$, then
\[
    \bR_{\lambda_{a,n}}\bR_{\lambda_{b,n}}
    =\frac{\bR_{\lambda_{a,n}}-\bR_{\lambda_{b,n}}}
           {\lambda_{b,n}-\lambda_{a,n}},
\]
and hence
\[
\begin{aligned}
&p^{-1}\tr\{\bR_{\lambda_{a,n}}\bR_{\lambda_{b,n}}\bA\}-\mathcal K^{\bA}_{ab,n}\\
&\quad=\frac{
 p^{-1}\tr\{\bR_{\lambda_{a,n}}\bA\}-\vartheta_{\bA,n}(\lambda_{a,n})
-\big[p^{-1}\tr\{\bR_{\lambda_{b,n}}\bA\}-\vartheta_{\bA,n}(\lambda_{b,n})\big]}
 {\lambda_{b,n}-\lambda_{a,n}}.
\end{aligned}
\]
Since $\lambda_{a,n}\to\lambda_{a,0}$, $\lambda_{b,n}\to\lambda_{b,0}$ and the finite grid has distinct limiting points after duplicate values are removed, the denominator is bounded away from zero for $a\ne b$.  Thus the last display is $O_p(n^{-1/2})$.  If $\lambda_{a,n}=\lambda_{b,n}$, which can occur only for $a=b$ after repeated grid values have been merged, the preceding difference quotient is replaced by the two-resolvent deterministic equivalent at coincident spectral parameters.  Equivalently, for $\bA\in\{\bI_p,\bSigma_p\}$,
\[
    p^{-1}\tr\{\bR_\lambda^2\bA\}
    +\partial_\lambda \vartheta_{\bA,n}(\lambda)=o_p(1),
\]
with the derivative taken on the negative real axis.  This derivative form is the coincident-parameter limit of the deterministic-equivalent system; it is justified by the uniform resolvent bound
\[
    \|\bR_\lambda\|_{\op}\le \underline\lambda^{-1},
    \qquad
    \|\bR_{\lambda+h}-\bR_\lambda\|_{\op}\le |h|\underline\lambda^{-2},
\]
and the locally uniform deterministic-equivalent convergence on compact subsets of $(0,\infty)$.
The two-sided trace
\[
    p^{-1}\tr\{\bR_{\lambda_{a,n}}\bSigma_p\bR_{\lambda_{b,n}}\bSigma_p\}
\]
is a standard bilinear-resolvent deterministic equivalent for sample covariance matrices.  Under the separable covariance model $\bXmat=\bSigma_p^{1/2}\bZmat$, the negative-real-axis version follows from the deterministic-equivalent theory of Hachem, Loubaton and Najim \cite{hachem2007deterministic} and the bilinear resolvent bounds of Hachem et al. \cite{hachem2013bilinear}; the finite-fourth-moment reduction is obtained by the truncation argument used in Bai and Silverstein \cite{bai2004clt} and in Li and Xu \cite{lixu2026ridgecp}.  Therefore there is a deterministic $\mathcal K^{\Sigma\Sigma}_{ab,n}$ such that
\[
    p^{-1}\tr\{\bR_{\lambda_{a,n}}\bSigma_p\bR_{\lambda_{b,n}}\bSigma_p\}
      -\mathcal K^{\Sigma\Sigma}_{ab,n}=o_p(1).
\]
Because $\|\bSigma_p\|_{\op}\le C$ and the ridge parameters lie in a compact subset of $(0,\infty)$,
\[
    |\mathcal K^{\Sigma\Sigma}_{ab,n}|+|\mathcal K^{\Sigma}_{ab,n}|+|\mathcal K^{I}_{ab,n}|
    \le C.
\]
The weak convergence of $H_p$ and the convergence $\lambda_{a,n}\to\lambda_{a,0}$ imply convergence of $\mathcal K^{I}_{ab,n}$ and $\mathcal K^{\Sigma}_{ab,n}$ through the displayed formulas.  The deterministic-equivalent system in Hachem, Loubaton and Najim \cite{hachem2007deterministic} is continuous on the negative real axis, so $\mathcal K^{\Sigma\Sigma}_{ab,n}$ also has a finite limit.
\end{proof}

\begin{lemma}[Finite-grid regular-vector CLT]\label{lem:rvclt}
Let $\bm u_q,\bm v_q\in\R^n$, $q=1,\ldots,Q$, be deterministic regular vectors satisfying
\[
    \max_q(\|\bm u_q\|_2+\|\bm v_q\|_2)\le C,
    \qquad
    \sqrt n\max_q(\|\bm u_q\|_\infty+\|\bm v_q\|_\infty)\le C.
\]
Let $\lambda_{a_q,n}\in\Lambda_n$ and $b_q\in\R$ be fixed.  Define
\[
    H_{n,q}=p^{-1}\bm u_q^T\bXmat^T\bR_{\lambda_{a_q,n}}\bXmat\bm v_q,
    \qquad
    L_n=\sum_{q=1}^Q b_q\sqrt p\{H_{n,q}-EH_{n,q}\}.
\]
Then $L_n \xrightarrow{d} N(0,\sigma_L^2)$, where
\[
    \sigma_L^2
    =\lim_{n\to\infty}
      \sum_{q_1=1}^Q\sum_{q_2=1}^Q
      b_{q_1}b_{q_2}
      \mathcal V_{q_1q_2,n},
\]
\[
    \mathcal V_{q_1q_2,n}
    =\frac12\Gamma_{a_{q_1}a_{q_2},n}
      \{(\bm u_{q_1}^T\bm u_{q_2})(\bm v_{q_1}^T\bm v_{q_2})
       +(\bm u_{q_1}^T\bm v_{q_2})(\bm v_{q_1}^T\bm u_{q_2})\}
      +O(n^{-1/2}),
\]
where $\Gamma_{ab,n}$ is deterministic, $\Gamma_{aa,n}=\Gamma_n(\lambda_{a,n})$, and $\Gamma_{ab,n}\to\Gamma_{ab}$.  The numbers $\Gamma_{ab}$ are finite and are functions of the limiting cross-resolvent traces in Lemma~\ref{lem:crossres}.
\end{lemma}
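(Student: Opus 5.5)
The plan is a martingale CLT for $L_n$, taken conditionally on nothing: we work directly under the null model with $\bXmat=\bmu\bm 1^T+\bSigma_p^{1/2}\bZmat$. First we reduce to the centred case. The mean part enters $H_{n,q}$ only through terms linear or quadratic in $\bmu$. Under $H_0$, these either vanish against contrast vectors orthogonal to $\bm 1$ (the regular vectors used in Theorem~\ref{thm:joint} come from Lemma~\ref{lem:contrast}, whose entries sum to zero) or have fluctuations of order $o_p(p^{-1/2})$ after centring. So we may take $\bXmat=\bSigma_p^{1/2}\bZmat$, and by the truncation argument of Bai and Silverstein \cite{bai2004clt} we may further assume $|Z_{ij}|\le \eta_n n^{1/4}$ with $\eta_n\to0$ without changing limits.

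Next we write $\bm x_j$ for the columns, $\bR_{a}=\bR_{\lambda_{a,n}}$, and $E_j$ for conditional expectation given $\bm x_1,\ldots,\bm x_j$. Then
\[
L_n=\sum_{j=1}^n (E_j-E_{j-1})\sum_q b_q\sqrt p\,H_{n,q}=:\sum_{j=1}^n Y_{nj}.
\]
For each $j$ we remove column $j$ via the leave-one-out resolvent $\bR_{a,(j)}$ and the Sherman--Morrison identity $\bR_a\bm x_j=\bR_{a,(j)}\bm x_j/(1+n^{-1}\bm x_j^T\bR_{a,(j)}\bm x_j)$. We expand $\bm u^T\bXmat^T\bR_a\bXmat\bm v=\sum_{i,k}u_iv_k\bm x_i^T\bR_a\bm x_k$. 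The leading part of $Y_{nj}$ is then $(E_j-E_{j-1})$ of a quadratic form $p^{-1/2}\bm x_j^T\mathbf M_{j}\bm x_j$ plus bilinear terms $p^{-1/2}\bm x_j^T\mathbf M'_j\bm y_{j}$, where $\bm y_j$ collects $\sum_{i<j}u_i\bm x_i$, $\sum_{i<j}v_i\bm x_i$ and the $\mathbf M$'s are built from $\bR_{a,(j)}$. The bound $\sqrt n\|\bm u\|_\infty\le C$ makes each weight $u_jv_j=O(n^{-1})$. This lets us verify the Lyapunov condition $\sum_jE|Y_{nj}|^4\to0$ using standard quadratic-form moment bounds with fourth moments and truncation.

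The main step, and the expected obstacle, is the conditional variance $\sum_jE_{j-1}Y_{nj}^2\to\sigma_L^2$ in probability. Computing $E_{j-1}$ of products of quadratic and bilinear forms produces traces $p^{-1}\tr\{\bR_{a,(j)}\bSigma_p\bR_{b,(j)}\bSigma_p\}$, $p^{-1}\tr\{\bR_{a,(j)}\bR_{b,(j)}\bSigma_p\}$ and $p^{-1}\tr\{\bR_{a,(j)}\bR_{b,(j)}\}$. The non-Gaussian fourth-cumulant term carries the factor $\sum_j u_j^2v_j^2=O(n^{-1})$ and is negligible. It also produces products of the deterministic sums $\sum_{i}u_{q_1,i}u_{q_2,i}$ and so on, which assemble into $(\bm u_{q_1}^T\bm u_{q_2})(\bm v_{q_1}^T\bm v_{q_2})+(\bm u_{q_1}^T\bm v_{q_2})(\bm v_{q_1}^T\bm u_{q_2})$. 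The Sherman--Morrison denominators are replaced by their deterministic equivalents $a_n(\lambda)$-type quantities with error $O_p(n^{-1/2})$. We then replace the leave-one-out traces by full traces with operator-norm error $O(n^{-1})$, and invoke Lemma~\ref{lem:crossres} to replace the traces by $\mathcal K^{\Sigma\Sigma}_{ab,n},\mathcal K^{\Sigma}_{ab,n},\mathcal K^{I}_{ab,n}$. The resulting polynomial in these is what we define as $\Gamma_{ab,n}$.

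Consistency with the marginal theory follows because setting $a=b$ must reproduce $\Gamma_n(\lambda_{a,n})$. We check this by plugging the coincident-parameter formulas $\mathcal K^{\bA}_{aa,n}=-\partial_\lambda\vartheta_{\bA,n}$ into the polynomial and comparing with the fixed-ridge variance of Li and Xu \cite{lixu2026ridgecp}. Convergence $\Gamma_{ab,n}\to\Gamma_{ab}$ is inherited from the limits in Lemma~\ref{lem:crossres}. The $O(n^{-1/2})$ remainder in $\mathcal V_{q_1q_2,n}$ comes from the $O(n^{-1})$ contrast discretisation and the $O_p(n^{-1/2})$ deterministic-equivalent rates. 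The martingale CLT (e.g.\ Billingsley's theorem) then gives $L_n\xrightarrow{d}N(0,\sigma_L^2)$. If $\sigma_L^2=0$, convergence to a point mass follows from the same variance computation and Chebyshev's inequality.
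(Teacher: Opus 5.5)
Your overall route is the same as the paper's: martingale differences over columns, leave-one-out resolvents with Sherman--Morrison/Woodbury, truncation plus a fourth-moment Lyapunov bound, a conditional variance expressed through the cross-resolvent traces of Lemma~\ref{lem:crossres}, and a martingale CLT. Two of your preliminary steps are wrong as written, though neither affects the core argument once corrected.

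\emph{Mean reduction.} $\bR_\lambda$ is the resolvent of the \emph{uncentered} $\bOmega_n=n^{-1}\bXmat\bXmat^T$. It therefore depends on $\bmu$ nonlinearly, through a low-rank perturbation, and not only through terms linear or quadratic in $\bmu$. Moreover, the lemma is not applied only to contrasts orthogonal to $\bm 1$. Lemma~\ref{lem:centered} uses it with $\bm e_n\propto\bm 1$; there $\bXmat\bm e_n$ contains $\sqrt p\,\bmu$, and the mean does not drop out. The paper instead removes $\bmu$ at the level of the statistic: in the proof of Theorem~\ref{thm:joint}, $\bDelta(s)$ and $\bS_n$ are invariant under a common shift. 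So the lemma should simply be proved for $\bXmat=\bSigma_p^{1/2}\bZmat$.

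\emph{Truncation level.} With only four moments you cannot truncate at $\eta_n n^{1/4}$. The available bound is $np\,\Pbb(|Z_{11}|>\eta_n n^{1/4})\le p\,\eta_n^{-4}E\{Z_{11}^4I(|Z_{11}|>\eta_n n^{1/4})\}$, which need not vanish. The admissible level is $\eta_n\sqrt n$, which is what the paper uses.

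\emph{Conditional variance.} Here you are at the paper's level of detail, but your description is not accurate, and the missing piece is the same in both arguments. You correctly note that the bilinear terms involve partial sums $\sum_{i<j}u_{q,i}\bm x_i$. But after the projection $(E_j-E_{j-1})$ the matrices are $E_j\bR_{a,(j)}$, not $\bR_{a,(j)}$. The conditional variance therefore contains $\sum_{i<j}u_{q_1,i}u_{q_2,i}$ times $p^{-1}\tr\{E_j(\bR_{a,(j)})\bSigma_pE_j(\bR_{b,(j)})\bSigma_p\}$, together with Sherman--Morrison terms carrying products of two such partial sums.

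As in the Bai--Silverstein CLT, these traces have deterministic equivalents that depend on $j/n$. They move from $p^{-1}\tr\{\bD_n(\lambda_{a,n})\bSigma_p\bD_n(\lambda_{b,n})\bSigma_p\}$ at $j=1$ to $\mathcal K^{\Sigma\Sigma}_{ab,n}$ at $j=n$, and the paper itself notes these differ. Hence ``replace leave-one-out traces by full traces and apply Lemma~\ref{lem:crossres}'' does not literally apply. Nor do the partial sums assemble term by term into $(\bm u_{q_1}^T\bm u_{q_2})(\bm v_{q_1}^T\bm v_{q_2})+(\bm u_{q_1}^T\bm v_{q_2})(\bm v_{q_1}^T\bm u_{q_2})$.

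What must be shown is that, after summing over $j$, these contributions recombine (effectively a telescoping in $j$) into $\frac12\Gamma_{ab,n}\{\cdots\}$, with a constant free of the $\bm u$'s and $\bm v$'s. This is exactly what makes the limit pivotal in $s$. In the Gaussian case you can check the target directly: the identity $\bXmat^T\bR_\lambda\bXmat=n\{\bI_n-\lambda(n^{-1}\bXmat^T\bXmat+\lambda\bI_n)^{-1}\}$, together with $\bZmat\mathbf O\overset{d}{=}\bZmat$ for orthogonal $\mathbf O$, gives the factorized covariance. (A minor point: $\bR_a-\bR_{a,(j)}$ is rank one with $O(1)$ operator norm; it is the normalized trace that moves by $O(n^{-1})$.)
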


\begin{proof}
Let $\mathcal F_j=\sigma(\bm X_1,\ldots,\bm X_j)$, $E_j(\cdot)=E(\cdot\mid\mathcal F_j)$, and
\[
    \xi_{n,j}=(E_j-E_{j-1})L_n,
    \qquad
    L_n=\sum_{j=1}^n\xi_{n,j}.
\]
Then
\[
    E_{j-1}\xi_{n,j}=0,
    \qquad
    E\xi_{n,j}=0.
\]
For $\bR_{j,\lambda}=(\bOmega_n^{(j)}+\lambda\bI_p)^{-1}$ and
\[
    \bOmega_n^{(j)}=n^{-1}\sum_{i\ne j}\bm X_i\bm X_i^T,
\]
$\bR_{j,\lambda}$ is independent of $\bm X_j$ but is not $\mathcal F_{j-1}$-measurable.  The proof uses only this independence.  Woodbury's identity gives
\[
    \bR_\lambda=\bR_{j,\lambda}
    -\frac{n^{-1}\bR_{j,\lambda}\bm X_j\bm X_j^T\bR_{j,\lambda}}
    {1+n^{-1}\bm X_j^T\bR_{j,\lambda}\bm X_j}.
\]
Writing $\bXmat^{(j)}$ for $\bXmat$ with the $j$th column replaced by zero,
\[
\begin{aligned}
 \bm u^T\bXmat^T\bR_\lambda\bXmat\bm v
&=\bm u^T(\bXmat^{(j)})^T\bR_{j,\lambda}\bXmat^{(j)}\bm v+B_{j,\lambda}(\bm u,\bm v),\\
B_{j,\lambda}(\bm u,\bm v)
&=v_j \bm u^T(\bXmat^{(j)})^T\bR_{j,\lambda}\bm X_j
  +u_j\bm X_j^T\bR_{j,\lambda}\bXmat^{(j)}\bm v
  +u_jv_j\bm X_j^T\bR_{j,\lambda}\bm X_j\\
&\quad-\frac{n^{-1}
       \{\bm u^T(\bXmat^{(j)})^T\bR_{j,\lambda}\bm X_j+u_j\bm X_j^T\bR_{j,\lambda}\bm X_j\}
       \{\bm X_j^T\bR_{j,\lambda}\bXmat^{(j)}\bm v+v_j\bm X_j^T\bR_{j,\lambda}\bm X_j\}}
       {1+n^{-1}\bm X_j^T\bR_{j,\lambda}\bm X_j}.
\end{aligned}
\]
Therefore
\[
    \xi_{n,j}=\sum_{q=1}^Q b_qp^{-1/2}(E_j-E_{j-1})B_{j,\lambda_{a_q,n}}(\bm u_q,\bm v_q).
\]
The finite-fourth-moment assumption is handled by the usual truncation and recentralization device.  Choose $\eta_n\downarrow0$ such that
\[
    \eta_n^{-4}E\{Z_{11}^4I(|Z_{11}|>\eta_n\sqrt n)\}\to0,
\]
put
\[
    Z_{ij}^{(0)}=Z_{ij}I(|Z_{ij}|\le\eta_n\sqrt n),
    \qquad
    \widetilde Z_{ij}=\frac{Z_{ij}^{(0)}-EZ_{ij}^{(0)}}{\{\Var(Z_{ij}^{(0)})\}^{1/2}}.
\]
Since $p/n\to\gamma$,
\[
\begin{aligned}
P\{Z_{ij}\ne Z_{ij}^{(0)}\text{ for some }i,j\}
&\le np\,P(|Z_{11}|>\eta_n\sqrt n)\\
&\le C\eta_n^{-4}E\{Z_{11}^4I(|Z_{11}|>\eta_n\sqrt n)\}=o(1),
\end{aligned}
\]
while
\[
    |EZ_{11}^{(0)}|\le (\eta_n\sqrt n)^{-3}E\{|Z_{11}|^4I(|Z_{11}|>\eta_n\sqrt n)\}=o(n^{-3/2}),
\]
\[
    |\Var(Z_{11}^{(0)})-1|
    \le E\{Z_{11}^2I(|Z_{11}|>\eta_n\sqrt n)\}+|EZ_{11}^{(0)}|^2=o(1).
\]
Thus the centering and rescaling modify $L_n$ by $o_p(1)$.  The difference between the original and truncated versions is therefore $o_p(1)$; see the truncation argument of Bai and Silverstein \cite{bai2004clt} and Li and Xu \cite{lixu2026ridgecp}.  For the truncated variables the quadratic-form inequalities below hold uniformly in $j$ and $\lambda$.  Since $\|\bR_{j,\lambda}\|_{\op}\le\underline\lambda^{-1}$ and $\|\bSigma_p\|_{\op}\le C$,
\[
\begin{aligned}
&E|\bm X_j^T\bR_{j,\lambda}\bXmat^{(j)}\bm v|^4
   \le C p^2\|v\|_2^4,\\
&E|\bm u^T(\bXmat^{(j)})^T\bR_{j,\lambda}\bm X_j|^4
   \le C p^2\|u\|_2^4,\\
&E|\bm X_j^T\bR_{j,\lambda}\bm X_j-\tr(\bR_{j,\lambda}\bSigma_p)|^4
   \le C p^2,\\
&0\le n^{-1}\bm X_j^T\bR_{j,\lambda}\bm X_j,
   \qquad
   (1+n^{-1}\bm X_j^T\bR_{j,\lambda}\bm X_j)^{-1}\le1.
\end{aligned}
\]
The third inequality is the quadratic-form bound of Bai and Silverstein \cite{bai1998noeigenvalues}; it is also Lemma S.1.4 in Li and Xu \cite{lixu2026ridgecp}.  With
\[
    |u_j|+|v_j|\le Cn^{-1/2},
    \qquad
    p/n\to\gamma,
\]
these inequalities imply
\[
    E|p^{-1/2}B_{j,\lambda}(\bm u,\bm v)|^4\le C n^{-2},
    \qquad
    E_{j-1}|p^{-1/2}(E_j-E_{j-1})B_{j,\lambda}(\bm u,\bm v)|^4\le Cn^{-2}+r_{n,j},
\]
where
\[
    \sum_{j=1}^nE r_{n,j}=O(n^{-1}).
\]
Hence
\[
    \sum_{j=1}^nE_{j-1}|\xi_{n,j}|^4=O_p(n^{-1}).
\]
For every $\eta>0$,
\[
\begin{aligned}
\sum_{j=1}^nE_{j-1}\{\xi_{n,j}^2I(|\xi_{n,j}|>\eta)\}
&\le \eta^{-2}\sum_{j=1}^nE_{j-1}|\xi_{n,j}|^4
=O_p(n^{-1})=o_p(1).
\end{aligned}
\]
For the conditional variance,
\[
\begin{aligned}
\sum_{j=1}^nE_{j-1}\xi_{n,j}^2
&=\sum_{q_1,q_2}b_{q_1}b_{q_2}p^{-1}
  \sum_{j=1}^n
  E_{j-1}\Big[
       (E_j-E_{j-1})B_{j,\lambda_{a_{q_1},n}}(\bm u_{q_1},\bm v_{q_1})\\
&\hspace{10em}\times
       (E_j-E_{j-1})B_{j,\lambda_{a_{q_2},n}}(\bm u_{q_2},\bm v_{q_2})
        \Big].
\end{aligned}
\]
Expanding $B_{j,\lambda}$ and using
\[
\begin{aligned}
& E\{(\bm X_j^T\bA\bm X_j-\tr(\bA\bSigma_p))
       (\bm X_j^T\bB\bm X_j-\tr(\bB\bSigma_p))
       \mid \bXmat^{(j)}\} \\
&\quad =2\tr(\bA\bSigma_p\bB\bSigma_p)
 +O\left[\{\tr(\bA\bSigma_p\bA^T\bSigma_p)
        \tr(\bB\bSigma_p\bB^T\bSigma_p)\}^{1/2}\delta_n\right],
\end{aligned}
\]
where the truncation error $\delta_n\to0$ under the finite fourth moment assumption, yields
\[
\begin{aligned}
&\sum_{j=1}^nE_{j-1}\xi_{n,j}^2\\
&\quad=\sum_{q_1,q_2}b_{q_1}b_{q_2}
   \frac12\Gamma_{a_{q_1}a_{q_2},n}
   \{(\bm u_{q_1}^T\bm u_{q_2})(\bm v_{q_1}^T\bm v_{q_2})
     +(\bm u_{q_1}^T\bm v_{q_2})(\bm v_{q_1}^T\bm u_{q_2})\}
   +O_p(n^{-1/2}).
\end{aligned}
\]
All terms in $\Gamma_{ab,n}$ are finite linear combinations of traces of the form displayed in Lemma~\ref{lem:crossres}.  Thus $\Gamma_{ab,n}\to\Gamma_{ab}$ and the conditional variance converges in probability to $\sigma_L^2$.  The martingale central limit theorem of Hall and Heyde \cite{hallheyde1980} gives
\[
    L_n \xrightarrow{d} N(0,\sigma_L^2).
\]
\end{proof}

\begin{lemma}[Uniform increment bound]\label{lem:tight}
For
\[
    Z_{n,\ell}^{\Omega}(s)
    =\sqrt p\,
      \frac{(N(s)/n)p^{-1}\tilde{\bm u}_n(s)^T\bXmat^T\bR_{\lambda_{\ell,n}}\bXmat\tilde{\bm u}_n(s)-\Theta_n(\lambda_{\ell,n})}
           {\Gamma_n(\lambda_{\ell,n})^{1/2}},
\]
there is a constant $C$ such that, uniformly in $\ell$ and $s,r\in\calS$,
\[
    E|Z_{n,\ell}^{\Omega}(s)-Z_{n,\ell}^{\Omega}(r)|^2
    \le C\|s-r\|_1+Cn^{-1}.
\]
If $\calS=\calS_{\rm sc}(\varepsilon)$, then the strengthened adjacent-increment bound
\[
    E\{|Z_{n,\ell}^{\Omega}(t)-Z_{n,\ell}^{\Omega}(t_1)|^2
       |Z_{n,\ell}^{\Omega}(t_2)-Z_{n,\ell}^{\Omega}(t)|^2\}
    \le C|t_2-t_1|^2+Cn^{-2}
\]
holds for $\varepsilon\le t_1\le t\le t_2\le1-
\varepsilon$.  Consequently $\{Z_{n,\ell}^{\Omega}:1\le\ell\le L\}$ is tight in $\ell^\infty(\calS)^L$.  For deterministic finite $\calS$, tightness is finite-dimensional.
\end{lemma}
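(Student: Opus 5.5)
The second-moment bound will come from the increment representation in Lemma~\ref{lem:contrast} combined with the martingale-difference decomposition behind Lemma~\ref{lem:rvclt}. Since $N(s)/n=\kappa(s)^{-1}+O(n^{-1})$ uniformly on the trimmed set, I first write
\[
Z_{n,\ell}^{\Omega}(s)-Z_{n,\ell}^{\Omega}(r)
=\frac{\sqrt p}{\Gamma_n(\lambda_{\ell,n})^{1/2}}\Big\{\tfrac{N(s)}{n}Q(s,s)-\tfrac{N(r)}{n}Q(r,r)\Big\},
\qquad Q(s,r)=p^{-1}\tilde{\bm u}_n(s)^T\bXmat^T\bR_{\lambda}\bXmat\tilde{\bm u}_n(r).
\]
The centring $\Theta_n$ cancels in the difference. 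I then split the difference into a centred part and a mean part:
\[
\tfrac{N(s)}{n}\{Q(s,s)-EQ(s,s)\}-\tfrac{N(r)}{n}\{Q(r,r)-EQ(r,r)\}
\quad\text{and}\quad
\tfrac{N(s)}{n}EQ(s,s)-\tfrac{N(r)}{n}EQ(r,r).
\]
By bilinearity, $Q(s,s)-Q(r,r)=Q(s-r,s)+Q(r,s-r)$ in the vector arguments. The centred part is therefore a sum of terms $\sqrt p\{H_n(\bm u,\bm v)-EH_n(\bm u,\bm v)\}$ in which one of $\bm u,\bm v$ equals $\bm d=\tilde{\bm u}_n(s)-\tilde{\bm u}_n(r)$. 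There is also a term $\{N(s)/n-N(r)/n\}\sqrt p\{Q-EQ\}$, whose coefficient is $O(\|s-r\|_1+n^{-1})$ and whose variance is $O(1)$.

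For the centred pieces I reuse the martingale decomposition $\sum_j(E_j-E_{j-1})$ of Lemma~\ref{lem:rvclt} and the fourth-moment quadratic-form bounds stated there. Since the martingale differences are orthogonal, the variance equals the sum of the $E\xi_{n,j}^2$, and each term is bounded by $Cn^{-1}\{\|\bm u\|_2^2\|\bm v\|_2^2+n\|\bm u\|_\infty^2\|\bm v\|_2^2+\cdots\}$. This gives
\[
\Var\big[\sqrt p\,H_n(\bm d,\bm v)\big]\le C\|\bm d\|_2^2\|\bm v\|_2^2+C\|\bm v\|_2^2\|\bm d\|_2^2 .
\]
For this step it is essential that every bound is quadratic in $\bm d$. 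The sup-norm factor $\sqrt n\|\bm d\|_\infty$ is only bounded, not small, so each $\|\bm d\|_\infty$ must be paired with the counting factor; the point is that $\sum_j d_j^2=\|\bm d\|_2^2$. Lemma~\ref{lem:contrast} then gives $\|\bm d\|_2^2\le C\|s-r\|_1+Cn^{-1}$. I handle the mean part using exchangeability of the columns under $H_0$: $E\bXmat^T\bR\bXmat=\alpha_n\bI_n+\beta_n\mathbf 1\mathbf 1^T$, where $\alpha_n$ is essentially $p\,\Theta_n$ plus $O(\sqrt p\,n^{-1/2})$. Because $\tilde{\bm u}_n(s)^T\mathbf 1=0$, the mean reduces to $\alpha_n p^{-1}\tfrac{N(s)}{n}\|\tilde{\bm u}_n(s)\|_2^2$, which is exactly $\Theta_n$ up to $O(n^{-1})$. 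After multiplying by $\sqrt p$, the mean differences are $O(n^{-1/2})$, so their square is $O(n^{-1})$. For the mean contribution I expect to have to take $\bmu=0$ without loss of generality (contrasts annihilate $\bmu$), and to use $\bOmega_n$ rather than $\bS_n$ as in the appendix notation.

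For the single-change adjacent-increment bound I use Cauchy--Schwarz, which requires fourth moments of increments. The plan is the Burkholder inequality applied to the same martingale, giving $E|\text{incr}|^4\le C(\|\bm d\|_2^2)^2+Cn^{-2}$. For $s_t=(0,t,1)$ the vectors are explicit, and $\|\tilde{\bm u}_n(t)-\tilde{\bm u}_n(t_1)\|_2^2\le C|t-t_1|+Cn^{-1}$. The product of the two square roots is then at most $C(|t-t_1|+n^{-1})(|t_2-t|+n^{-1})\le C|t_2-t_1|^2+Cn^{-2}$ (using $|t-t_1|\,|t_2-t|\le|t_2-t_1|^2$ and $n^{-1}|t_2-t_1|\le|t_2-t_1|^2+n^{-2}$). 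I expect the Burkholder step to be the main obstacle. The conditional-variance term $\sum_jE_{j-1}\xi_j^2$ needs a second-moment control of order $\|\bm d\|_2^4$; the crude bound loses a factor when $\bR_j$ is not $\mathcal F_{j-1}$-measurable. I would try bounding it through $\tr(\bR\bSigma\bR\bSigma)\le Cp$ with weights $d_j^2$, plus the $\sum_jE|\xi_j|^4=O(n^{-1}\|\bm d\|^4_\infty n)$ term, which is where the $n^{-2}$ remainder enters.

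Tightness then follows. For the single-change scan I apply Billingsley's criterion (Theorem 13.5 type, with the $n^{-2}$ remainder absorbed on a grid of mesh $n^{-1}$, using the fact that $Z$ is piecewise constant between grid points $j/n$). Together with tightness at one point, which follows from Lemma~\ref{lem:rvclt}, this gives tightness of each $Z_{n,\ell}^\Omega$ in $\ell^\infty(\calS)$. Marginal tightness of finitely many coordinates implies joint tightness in $\ell^\infty(\calS)^L$. For finite deterministic $\calS$, tightness is immediate from the bounded variances.
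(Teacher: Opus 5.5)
Your proposal is correct and follows essentially the paper's route: the contrast bound of Lemma~\ref{lem:contrast}, a second-moment bound for centered quadratic-form increments, Cauchy--Schwarz with a Burkholder fourth-moment bound for the adjacent-increment condition, and Billingsley's criterion (the paper cites the increment bound as Theorem S.2.3 of Li and Xu, whereas you rederive it from the martingale decomposition via bilinearity). Your exchangeability treatment of the deterministic mean difference fills a step the paper leaves implicit, and it is sharper than you state: $N(s)/n=\|\tilde{\bm u}_n(s)\|_2^{-2}$ exactly, so under $H_0$ with i.i.d.\ columns (which the paper also tacitly assumes) the mean part of $Z_{n,\ell}^{\Omega}(s)-Z_{n,\ell}^{\Omega}(r)$ vanishes identically.
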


\begin{proof}
By Lemma~\ref{lem:contrast},
\[
    \|\tilde{\bm u}_n(s)-\tilde{\bm u}_n(r)\|_2^2
    \le C\|s-r\|_1+Cn^{-1}.
\]
For regular $\bm u_1,\bm u_2$,
\[
\begin{aligned}
&E\Big|\sqrt p\{p^{-1}\bm u_1^T\bXmat^T\bR_\lambda\bXmat\bm u_1
       -E p^{-1}\bm u_1^T\bXmat^T\bR_\lambda\bXmat\bm u_1
       -p^{-1}\bm u_2^T\bXmat^T\bR_\lambda\bXmat\bm u_2
       +E p^{-1}\bm u_2^T\bXmat^T\bR_\lambda\bXmat\bm u_2\}\Big|^2\\
&\quad\le C\|\bm u_1-\bm u_2\|_2^2.
\end{aligned}
\]
This inequality is Theorem S.2.3 of Li and Xu \cite{lixu2026ridgecp}, with constants uniform for $\lambda\in[\underline\lambda,\overline\lambda]$.  Multiplication by $N(s)/n=\kappa(s)^{-1}+O(n^{-1})$ preserves the rate because
\[
    c_\varepsilon\le\kappa(s)^{-1}\le C_\varepsilon.
\]
The first display follows.  The adjacent-increment bound is obtained by applying the Burkholder inequality to the martingale decomposition in Lemma~\ref{lem:rvclt}:
\[
\begin{aligned}
&E|Z(t)-Z(t_1)|^2|Z(t_2)-Z(t)|^2\\
&\quad\le
\{E|Z(t)-Z(t_1)|^4\}^{1/2}
\{E|Z(t_2)-Z(t)|^4\}^{1/2}\\
&\quad\le
C\{\|\tilde{\bm u}_n(t)-\tilde{\bm u}_n(t_1)\|_2^2+n^{-1}\}
 \{\|\tilde{\bm u}_n(t_2)-\tilde{\bm u}_n(t)\|_2^2+n^{-1}\}\\
&\quad\le C|t_2-t_1|^2+Cn^{-2}.
\end{aligned}
\]
The fourth-moment inequality above is the truncated-variable extension used in Bai and Silverstein \cite{bai2004clt} and in the proof of Theorem S.2.4 of Li and Xu \cite{lixu2026ridgecp}.  Billingsley's tightness criterion for one-dimensional processes \cite{billingsley1999convergence} gives tightness on $\calS_{\rm sc}(\varepsilon)$, and a fixed finite index set is automatically tight.
\end{proof}

\begin{lemma}[Centered-covariance replacement]\label{lem:centered}
Let $\bm e_n=p^{1/2}n^{-1}(1,\ldots,1)^T$.  For every $s\in\calS$,
\[
    \tilde{\bm u}_n(s)^T\bm e_n=0.
\]
Under Assumption~\ref{ass:rht} and $H_0$,
\[
    \max_{1\le\ell\le L}\sup_{s\in\calS}
    \left|p^{-1}\tilde{\bm u}_n(s)^T\bXmat^T\bR_{\lambda_{\ell,n}}\bXmat\bm e_n\right|
    =O_p(p^{-1/2}),
\]
\[
    \max_{1\le\ell\le L}
    \left|\bar{\bm X}^T\bR_{\lambda_{\ell,n}}\bar{\bm X}
      -\gamma_n\Theta_n(\lambda_{\ell,n})\right|
    =O_p(p^{-1/2}),
\]
and, with $\bS_n=\bOmega_n-\bar{\bm X}\bar{\bm X}^T$,
\[
\begin{aligned}
&\max_{1\le\ell\le L}\sup_{s\in\calS}
\sqrt p\left|p^{-1}\tilde{\bm u}_n(s)^T\bXmat^T
\left\{(\bS_n+\lambda_{\ell,n}\bI_p)^{-1}-\bR_{\lambda_{\ell,n}}\right\}
\bXmat\tilde{\bm u}_n(s)\right|\\
&\qquad=O_p(p^{-1/2}).
\end{aligned}
\]
\end{lemma}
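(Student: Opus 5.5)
The statement to prove is Lemma~\ref{lem:centered}, which has four parts. I will treat them in the order stated and reduce everything to bounds on a single nuisance direction, the sample mean. Throughout I work under $H_0$ and assume without loss of generality that $\bmu=0$. This is harmless: $\bDelta(s)$ is invariant to a common mean, and $\bS_n$ is centered.

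\emph{Orthogonality.} The first claim is the identity $\tilde{\bm u}_n(s)^T\bm e_n=0$. It follows because the two blocks of $\tilde{\bm u}_n(s)$ have opposite sums, since $m_{12}\cdot m_{12}^{-1}=m_{23}\cdot m_{23}^{-1}=1$.

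\emph{The two mean bounds.} Note that $\bXmat\bm e_n=p^{1/2}\bar{\bm X}$, so the first bound concerns $p^{-1/2}\bDelta(s)^T\bR_\lambda\bar{\bm X}\cdot n^{1/2}$, up to scaling. I will apply Lemma~\ref{lem:rvclt}-type moment bounds with $\bm u=\tilde{\bm u}_n(s)$ and $\bm v=n^{-1/2}\mathbf 1=\bm e_n\,n^{1/2}p^{-1/2}$. Both vectors are regular in the sense of that lemma. The centred part is then $O_p(p^{-1/2})$ pointwise, and the Woodbury expansion shows that its expectation has the form $p^{-1}\tr(\cdot)\,\tilde{\bm u}^T\bm v+O(n^{-1/2})$. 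The leading term vanishes by the orthogonality just proved. To get uniformity over $s$, I use the increment bound of Lemma~\ref{lem:tight} for bilinear forms; the same martingale argument applies with $\bm v$ fixed. I then take the maximum over the finite $\ell$.

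The second bound, for $\bar{\bm X}^T\bR_\lambda\bar{\bm X}$, is the case $\bm u=\bm v=n^{-1/2}\mathbf 1$ multiplied by $p/n$. By the same computation as for $D_\lambda$, its mean is $\gamma_n\Theta_n(\lambda)+O(n^{-1/2})$, since $\Theta_n$ is exactly the deterministic equivalent of $p^{-1}\bm u^T\bXmat^T\bR\bXmat\bm u$ for unit $\bm u$. This step uses Lemma 3.1 of Li and Xu together with the identity $n^{-1}\bXmat^T\bR_\lambda\bXmat=\bI-\lambda(n^{-1}\bXmat^T\bXmat+\lambda)^{-1}$. The fluctuation is $O_p(p^{-1/2})$ by the variance bound.

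\emph{Resolvent replacement.} This is the main step. Since $\bS_n=\bOmega_n-\bar{\bm X}\bar{\bm X}^T$ is a rank-one perturbation, Sherman--Morrison gives
\[
(\bS_n+\lambda\bI_p)^{-1}-\bR_\lambda=\frac{\bR_\lambda\bar{\bm X}\bar{\bm X}^T\bR_\lambda}{1-\bar{\bm X}^T\bR_\lambda\bar{\bm X}}.
\]
Hence the quantity to bound is
\[
\sqrt p\,\frac{\{p^{-1/2}\,\tilde{\bm u}_n^T\bXmat^T\bR_\lambda\bar{\bm X}\}^2}{1-\bar{\bm X}^T\bR_\lambda\bar{\bm X}}.
\]
Here $p^{-1/2}\tilde{\bm u}_n^T\bXmat^T\bR_\lambda\bar{\bm X}=p^{-1}\tilde{\bm u}_n^T\bXmat^T\bR_\lambda\bXmat\bm e_n=O_p(p^{-1/2})$ uniformly, by the first bound. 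The numerator is therefore $\sqrt p\cdot O_p(p^{-1})=O_p(p^{-1/2})$.

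For the denominator, $\bar{\bm X}^T\bR_\lambda\bar{\bm X}<1$ deterministically, because $\bOmega_n\succeq\bar{\bm X}\bar{\bm X}^T$. The denominator also converges to $1-\gamma_n\Theta_n(\lambda)$ by the second bound. I need this limit to be bounded away from zero. For that I would use $1-\gamma_n\Theta_n=1-\gamma_n+\gamma_n\lambda\phi_n(-\lambda)=a_n(\lambda)\ge c>0$, from the stated negative-axis bounds.

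The main obstacle is the uniform-in-$s$ version of the first bound, because its mean term must be shown to vanish beyond the leading order. If a direct bias computation is messy, I would instead bound $\sup_s$ via the $\ell^2$ structure: write $\tilde{\bm u}_n(s)$ as a combination of at most three partial-sum indicator vectors, each with norm bounded by $C_\varepsilon$, so that only finitely many regular vectors (per discretization) need controlling. For the continuous single-change scan, I would use the chaining/tightness argument of Lemma~\ref{lem:tight} instead.
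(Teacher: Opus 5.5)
Your proposal is correct and follows the paper's proof essentially step for step: the orthogonality identity; a bias-plus-fluctuation bound for $p^{-1}\tilde{\bm u}_n(s)^T\bXmat^T\bR_\lambda\bXmat\bm e_n$, with the leading bias term killed by $\tilde{\bm u}_n(s)^T\bm e_n=0$ and uniformity in $s$ from the increment/tightness bound; the regular-vector deterministic equivalent for $\bar{\bm X}^T\bR_\lambda\bar{\bm X}$; and Sherman--Morrison, with the denominator (which you correctly note is always positive because $\bS_n\succeq0$) bounded below via $a_n(\lambda)\ge c$. The ``main obstacle'' you flag is not one, since an $O(n^{-1/2})$ bias remainder is already $O(p^{-1/2})$ when $p/n\to\gamma$. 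Also, as in the paper, setting $\bmu=0$ is justified by the mean-invariance of $D_\lambda$, not of the lemma's own displays, which involve the uncentered $\bOmega_n$ and $\bar{\bm X}$.
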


\begin{proof}
The equality $\tilde{\bm u}_n(s)^T\bm e_n=0$ follows from
\[
\begin{aligned}
\tilde{\bm u}_n(s)^T\bm e_n
&=p^{1/2}n^{-1}\sqrt n
\left[-\frac{k(t_2)-k(t_1)}{k(t_2)-k(t_1)}
      +\frac{k(t_3)-k(t_2)}{k(t_3)-k(t_2)}\right]=0.
\end{aligned}
\]
For
\[
    H_{n,\ell}(s)=p^{-1}\tilde{\bm u}_n(s)^T\bXmat^T\bR_{\lambda_{\ell,n}}\bXmat\bm e_n,
\]
Theorem S.2.2 of Li and Xu \cite{lixu2026ridgecp} gives
\[
    \sup_{s\in\calS}\max_\ell
    \left|EH_{n,\ell}(s)-\tilde{\bm u}_n(s)^T\bm e_n\Theta_n(\lambda_{\ell,n})\right|
    =o(p^{-1/2}).
\]
Since $\tilde{\bm u}_n(s)^T\bm e_n=0$,
\[
    \sup_{s\in\calS}\max_\ell |EH_{n,\ell}(s)|=o(p^{-1/2}).
\]
Lemma~\ref{lem:rvclt} and the increment bound in Lemma~\ref{lem:tight}, applied with one contrast vector equal to $\bm e_n$, yield
\[
    \max_\ell\sup_{s\in\calS}\sqrt p
    |H_{n,\ell}(s)-EH_{n,\ell}(s)|=O_p(1).
\]
Thus
\[
    \max_\ell\sup_{s\in\calS}|H_{n,\ell}(s)|=O_p(p^{-1/2}).
\]
Similarly,
\[
    \bar{\bm X}=n^{-1}\bXmat\bm 1_n=p^{-1/2}\bXmat\bm e_n,
\]
and hence
\[
    \bar{\bm X}^T\bR_{\lambda_{\ell,n}}\bar{\bm X}
    =p^{-1}\bm e_n^T\bXmat^T\bR_{\lambda_{\ell,n}}\bXmat\bm e_n.
\]
Since
\[
    \bm e_n^T\bm e_n=p/n=\gamma_n+O(n^{-1}),
\]
Theorems S.2.1--S.2.2 of Li and Xu \cite{lixu2026ridgecp} give
\[
    \max_\ell
    \left|p^{-1}\bm e_n^T\bXmat^T\bR_{\lambda_{\ell,n}}\bXmat\bm e_n
    -\gamma_n\Theta_n(\lambda_{\ell,n})\right|=O_p(p^{-1/2}).
\]
The Marcenko--Pastur equation gives
\[
    \inf_{\lambda\in[\underline\lambda,\overline\lambda]}
    |1-\gamma_n\Theta_n(\lambda)|
    =\inf_{\lambda\in[\underline\lambda,\overline\lambda]}
    |1-\gamma_n+\gamma_n\lambda\phi_n(-\lambda)|
    \ge c>0.
\]
Therefore
\[
    \Pbb\left\{
    \min_{1\le\ell\le L}
    |1-\bar{\bm X}^T\bR_{\lambda_{\ell,n}}\bar{\bm X}|
    \ge c/2\right\}\to1.
\]
Woodbury's identity gives
\[
    (\bS_n+\lambda\bI_p)^{-1}-\bR_\lambda
    =\frac{\bR_\lambda\bar{\bm X}\bar{\bm X}^T\bR_\lambda}
     {1-\bar{\bm X}^T\bR_\lambda\bar{\bm X}}.
\]
Consequently,
\[
\begin{aligned}
&p^{-1}\tilde{\bm u}_n(s)^T\bXmat^T
\{(\bS_n+\lambda_{\ell,n}\bI_p)^{-1}-\bR_{\lambda_{\ell,n}}\}
\bXmat\tilde{\bm u}_n(s)\\
&\quad=
\frac{\left[p^{-1}\tilde{\bm u}_n(s)^T\bXmat^T\bR_{\lambda_{\ell,n}}\bXmat\bm e_n\right]^2}
     {1-\bar{\bm X}^T\bR_{\lambda_{\ell,n}}\bar{\bm X}}.
\end{aligned}
\]
The numerator is $O_p(p^{-1})$ uniformly in $s$ and $\ell$, and the denominator is bounded away from zero with probability tending to one.  Multiplication by $\sqrt p$ gives $O_p(p^{-1/2})$.
\end{proof}

\subsection{Proof of Theorem~\ref{thm:joint}}

\begin{proof}
Under $H_0$, the common mean is removable:
\[
    \bDelta(s;\bm X_1+\bmu,\ldots,\bm X_n+\bmu)=\bDelta(s;\bm X_1,
    \ldots,\bm X_n),
\]
\[
    n^{-1}\sum_{j=1}^n(\bm X_j+\bmu-\bar{\bm X}-\bmu)
    (\bm X_j+\bmu-\bar{\bm X}-\bmu)^T=\bS_n.
\]
Thus $\bmu=0$.

For the uncentered, deterministic-standardization process in Lemma~\ref{lem:tight}, Lemma~\ref{lem:rvclt} with $\bm u_q=\bm v_q=\tilde{\bm u}_n(s_q)$ gives, for every finite collection $s_1,
\ldots,s_M\in\calS$ and coefficients $b_{\ell m}$,
\[
\sum_{\ell=1}^L\sum_{m=1}^M b_{\ell m}Z_{n,\ell}^{\Omega}(s_m)
\xrightarrow{d}
N\left(0,
\sum_{\ell,a=1}^L\sum_{m,r=1}^M
b_{\ell m}b_{ar}\rho_{\ell a}K_0(s_m,s_r)
\right),
\]
where
\[
    \rho_{\ell a}
    =\lim_{n\to\infty}
    \frac{\Gamma_{\ell a,n}}
    {\{\Gamma_n(\lambda_{\ell,n})
       \Gamma_n(\lambda_{a,n})\}^{1/2}},
    \qquad
    \rho_{\ell\ell}=1.
\]
Indeed,
\[
\begin{aligned}
&\Cov\{Z_{n,\ell}^{\Omega}(s),Z_{n,a}^{\Omega}(r)\}\\
&\quad=
\frac{(N(s)/n)(N(r)/n)
       \Gamma_{\ell a,n}\{\tilde{\bm u}_n(s)^T\tilde{\bm u}_n(r)\}^2}
     {\{\Gamma_n(\lambda_{\ell,n})
       \Gamma_n(\lambda_{a,n})\}^{1/2}}
     +O(n^{-1/2})\\
&\quad=
\frac{\Gamma_{\ell a,n}}
     {\{\Gamma_n(\lambda_{\ell,n})
       \Gamma_n(\lambda_{a,n})\}^{1/2}}
\frac{\kappa(s,r)^2}{\kappa(s)\kappa(r)}+O(n^{-1/2}).
\end{aligned}
\]
The Cramer--Wold device gives finite-dimensional convergence, and Lemma~\ref{lem:tight} gives tightness.  Hence
\[
    \{Z_{n,\ell}^{\Omega}(s):s\in\calS,
      \ell=1,
\ldots,L\}
    \xrightarrow{d}
    \{G_\ell^{\Omega}(s):s\in\calS,
      \ell=1,
\ldots,L\}.
\]

It remains to replace $\bOmega_n$ by the centered covariance $\bS_n=\bOmega_n-\bar{\bm X}\bar{\bm X}^T$.  Lemma~\ref{lem:centered} gives
\[
    \max_{1\le\ell\le L}\sup_{s\in\calS}
    |Z_{n,\ell}^{S}(s)-Z_{n,\ell}^{\Omega}(s)|=O_p(p^{-1/2}),
\]
where $Z_{n,\ell}^{S}$ denotes the deterministically standardized process based on $S_n$.

The empirical centering and scaling errors are controlled by Lemma 3.1 of Li and Xu \cite{lixu2026ridgecp}:
\[
    \sup_{\lambda\in[\underline\lambda,\overline\lambda]}
    \sqrt n|\widehat\Theta_\lambda-
\Theta_n(\lambda)|=o_p(1),
    \qquad
    \sup_{\lambda\in[\underline\lambda,\overline\lambda]}
    \sqrt n|\widehat\Gamma_\lambda-
\Gamma_n(\lambda)|=o_p(1).
\]
Because $p/n\to\gamma$,
\[
    \sqrt p\max_\ell
    |\widehat\Theta_{\lambda_{\ell,n}}-
\Theta_n(\lambda_{\ell,n})|=o_p(1).
\]
Furthermore,
\[
\begin{aligned}
&\max_\ell\left|\widehat\Gamma_{\lambda_{\ell,n}}^{-1/2}
-\Gamma_n(\lambda_{\ell,n})^{-1/2}\right|\\
&\quad\le C
\max_\ell|\widehat\Gamma_{\lambda_{\ell,n}}-
\Gamma_n(\lambda_{\ell,n})|
=o_p(n^{-1/2}),
\end{aligned}
\]
and Lemmas~\ref{lem:rvclt}--\ref{lem:tight} give
\[
    \max_\ell\sup_{s\in\calS}
    \left|\sqrt p\{p^{-1}V_{\lambda_{\ell,n}}(s)-
\Theta_n(\lambda_{\ell,n})\}\right|=O_p(1).
\]
Consequently,
\[
    \max_{1\le\ell\le L}\sup_{s\in\calS}
    |D_{\lambda_{\ell,n}}(s)-Z_{n,\ell}^{S}(s)|=o_p(1).
\]
Combining the three displays,
\[
    \{D_{\lambda_{\ell,n}}(s):s\in\calS,
      \ell=1,\ldots,L\}
    \xrightarrow{d}
    \{G_\ell(s):s\in\calS,
      \ell=1,\ldots,L\}.
\]
For $\ell=a$, $\rho_{\ell\ell}=1$, so the marginal covariance is $K_0(s,r)$.  The map
\[
    (f_1,\ldots,f_L)\mapsto
    (\sup_{s\in\calS}f_1(s),\ldots,
     \sup_{s\in\calS}f_L(s))
\]
is continuous on $\ell^\infty(\calS)^L$ at continuous sample paths.  Therefore
\[
    (T_{\lambda_{1,n}},\ldots,T_{\lambda_{L,n}})
    \xrightarrow{d}
    (T_{\infty,1},\ldots,T_{\infty,L}),
    \qquad
    T_{\infty,\ell}=\sup_{s\in\calS}G_\ell(s).
\]
Since $F_{\calS}$ is continuous,
\[
    \bm P_n
    =\{1-F_{\calS}(T_{\lambda_{1,n}}),\ldots,
       1-F_{\calS}(T_{\lambda_{L,n}})\}^T
    \xrightarrow{d}
    \{1-F_{\calS}(T_{\infty,1}),\ldots,
       1-F_{\calS}(T_{\infty,L})\}^T.
\]
\end{proof}

\subsection{Proof of Theorem~\ref{thm:size}}

\begin{proof}
By Theorem~\ref{thm:joint},
\[
    \bm P_n \xrightarrow{d} \bm P_\infty.
\]
For every $\delta\in(0,1/2)$, the map
\[
    C_\delta(\bm p)=\sum_{\ell=1}^Lw_\ell
    \tan\{\pi(1/2-p_\ell)\}
    1\{\delta\le p_\ell\le1-
\delta\}
\]
is continuous.  Since $P_{\infty,\ell}\in(0,1)$ almost surely,
\[
    \Pbb\{\min_\ell P_{\infty,\ell}\le\delta
        \text{ or }
        \max_\ell P_{\infty,\ell}\ge1-
\delta\}\downarrow0.
\]
Thus
\[
    C(\bm P_n) \xrightarrow{d} C(\bm P_\infty).
\]
At every continuity point $c_{\alpha,L}$,
\[
    \Pbb\{C(\bm P_n)>c_{\alpha,L}\}
    \to
    \Pbb\{C(\bm P_\infty)>c_{\alpha,L}\}
    =\alpha.
\]
For every fixed $\alpha\in(0,1/2)$,
\[
    P_{\CCT}\le\alpha
    \Longleftrightarrow
    \frac12-\frac{1}{\pi}\arctan C(\bm P_n)\le\alpha
    \Longleftrightarrow
    C(\bm P_n)\ge\cot(\pi\alpha).
\]
If $\cot(\pi\alpha)$ is a continuity point,
\[
    \lim_{n\to\infty}\Pbb(P_{\CCT}\le\alpha)
    =\Pbb\{C(\bm P_\infty)\ge\cot(\pi\alpha)\}.
\]
If
\[
    \Pbb\{C(\bm P_\infty)>t\}\sim(\pi t)^{-1},
    \qquad t\to\infty,
\]
then, since
\[
    \cot(\pi\alpha)=\frac{1}{\pi\alpha}
    \{1+O(\alpha^2)\},
    \qquad \alpha\downarrow0,
\]
we have
\[
\begin{aligned}
\lim_{\alpha\downarrow0}\lim_{n\to\infty}
\frac{\Pbb(P_{\CCT}\le\alpha)}{\alpha}
&=
\lim_{\alpha\downarrow0}
\frac{\Pbb\{C(\bm P_\infty)\ge\cot(\pi\alpha)\}}{\alpha}\\
&=
\lim_{\alpha\downarrow0}
\frac{1}{\pi\alpha\cot(\pi\alpha)}=1.
\end{aligned}
\]
\end{proof}

\subsection{Proof of Theorem~\ref{thm:consistency}}

\begin{proof}
Let
\[
    M_n=\max_{1\le\ell\le L}\sup_{s\in\calS}|Z_{n,\ell}(s)|.
\]
The assumption $M_n=O_p(1)$ means that, for each $K\to\infty$,
\[
    \limsup_{n\to\infty}\Pbb(M_n>K)\to0.
\]
For the favorable grid point $\ell_0$,
\[
\begin{aligned}
T_{\lambda_{\ell_0,n}}
&=\sup_{s\in\calS}D_{\lambda_{\ell_0,n}}(s)\\
&\ge \sup_{s\in\calS}R_{n,\ell_0}(s)
     -\sup_{s\in\calS}|Z_{n,\ell_0}(s)|-a_n\\
&\ge R_{n,\ell_0}^*-M_n-a_n.
\end{aligned}
\]
For every fixed $K>0$,
\[
\begin{aligned}
\Pbb(T_{\lambda_{\ell_0,n}}\le K)
&\le \Pbb(R_{n,\ell_0}^*-M_n-a_n\le K)\\
&\le \Pbb(M_n\ge R_{n,\ell_0}^*-K-a_n).
\end{aligned}
\]
Since $R_{n,\ell_0}^*\to\infty$, $a_n\to0$, and $M_n=O_p(1)$,
\[
    \Pbb(T_{\lambda_{\ell_0,n}}\le K)\to0,
    \qquad K<\infty.
\]
Thus
\[
    T_{\lambda_{\ell_0,n}}\to\infty
    \quad\text{in probability}.
\]
Because $F_{\calS}(x)\uparrow1$ as $x\to\infty$,
\[
    P_{\lambda_{\ell_0,n}}
    =1-F_{\calS}(T_{\lambda_{\ell_0,n}})\to0
    \quad\text{in probability}.
\]
For $p\downarrow0$,
\[
    \tan\{\pi(1/2-p)\}
    =\cot(\pi p)
    =\frac{1}{\pi p}+O(p).
\]
Hence
\[
    Y_{n,\ell_0}\to\infty
    \quad\text{in probability},
    \qquad
    w_{\ell_0}Y_{n,\ell_0}\to\infty
    \quad\text{in probability}.
\]
Let
\[
    B_n=\sum_{\ell\ne\ell_0}w_\ell Y_{n,\ell}.
\]
Then
\[
    C_n=w_{\ell_0}Y_{n,\ell_0}+B_n
    \ge w_{\ell_0}Y_{n,\ell_0}-[B_n]_-.
\]
The no-cancellation condition gives
\[
    \frac{[B_n]_-}{w_{\ell_0}Y_{n,\ell_0}}\to0
    \quad\text{in probability},
\]
and therefore
\[
    C_n\ge w_{\ell_0}Y_{n,\ell_0}
    \left(1-\frac{[B_n]_-}{w_{\ell_0}Y_{n,\ell_0}}\right)\to
    \infty
    \quad\text{in probability}.
\]
Since
\[
    P_{\CCT}=\frac12-\frac{1}{\pi}\arctan(C_n),
    \qquad
    \arctan x\uparrow \frac{\pi}{2}
    \quad (x\to\infty),
\]
we obtain
\[
    P_{\CCT}\to0
    \quad\text{in probability}.
\]
For any fixed $\alpha\in(0,1)$,
\[
    \Pbb(P_{\CCT}\le\alpha)\to1.
\]
\end{proof}

\end{document}